\newtheorem{theorem}{Theorem}[section]
\newtheorem{conjecture}[theorem]{Conjecture}
\newtheorem{lemma}[theorem]{Lemma}
\newtheorem{claim}[theorem]{Claim}
\newtheorem{definition}[theorem]{Definition}
\newcommand{\ignore}[1]{}
\newcommand{\cP}{\mathcal{P}}
\newcommand{\cS}{\mathcal{S}}
\newcommand{\cT}{{\cal T}}
\newcommand{\eps}{\varepsilon}
\newcommand{\poly}{\mathrm{poly}}
\newcommand{\dirinf}[2]{\mathrm{Inf}^+_#1(#2)}
\newcommand{\dirinfc}[3]{\mathrm{Inf}^+_{#1,#2}(#3)}
\newcommand{\dirvol}[1]{\mathrm{\mu}^+(#1)}
\newcommand{\Sec}[1]{\hyperref[sec:#1]{\S\ref*{sec:#1}}} 
\newcommand{\Eqn}[1]{\hyperref[eq:#1]{(\ref*{eq:#1})}} 
\newcommand{\Fig}[1]{\hyperref[fig:#1]{Fig.\,\ref*{fig:#1}}} 
\newcommand{\Tab}[1]{\hyperref[tab:#1]{Tab.\,\ref*{tab:#1}}} 
\newcommand{\Thm}[1]{\hyperref[thm:#1]{Theorem\,\ref*{thm:#1}}} 
\newcommand{\Fact}[1]{\hyperref[fact:#1]{Fact\,\ref*{fact:#1}}} 
\newcommand{\Lem}[1]{\hyperref[lem:#1]{Lemma\,\ref*{lem:#1}}} 
\newcommand{\Prop}[1]{\hyperref[prop:#1]{Prop.~\ref*{prop:#1}}} 
\newcommand{\Cor}[1]{\hyperref[cor:#1]{Corollary~\ref*{cor:#1}}} 
\newcommand{\Conj}[1]{\hyperref[conj:#1]{Conjecture~\ref*{conj:#1}}} 
\newcommand{\Def}[1]{\hyperref[def:#1]{Definition~\ref*{def:#1}}} 
\newcommand{\Alg}[1]{\hyperref[alg:#1]{Alg.~\ref*{alg:#1}}} 
\newcommand{\Clm}[1]{\hyperref[clm:#1]{Claim~\ref*{clm:#1}}} 
\newcommand{\Step}[1]{\hyperref[step:#1]{Step~\ref*{step:#1}}} 
\newcommand{\Exa}[1]{\hyperref[exa:#1]{Example~\ref*{exa:#1}}} 
\newcommand{\Obs}[1]{\hyperref[obs:#1]{Obs~\ref*{obs:#1}}} 
\newcommand\pink{\ensuremath{\textrm{pink}\xspace}}
\def\supS{\textcircled{{\scriptsize S}}\xspace}
\def\supT{\textcircled{{\scriptsize T}}\xspace}
\begin{document}

\title{Directed Hypercube Routing, a Generalized Lehman-Ron Theorem, and Monotonicity Testing}
\author{Deeparnab  Chakrabarty\thanks{Supported by NSF-CAREER award CCF-2041920 and award CCF-2402571} \\
Dartmouth\\
{\tt deeparnab@dartmouth.edu}
\and
C. Seshadhri\thanks{Supported by NSF CCF-1740850, CCF-1839317, CCF-2402572, and DMS-2023495.} \\
University of California, Santa Cruz\\
{\tt sesh@ucsc.edu}
}

\date{}
\maketitle
\begin{center}
	\emph{Dedicated to Dana Ron for her 60th birthday}
\end{center}

\begin{abstract}
Motivated by applications to monotonicity testing, Lehman and Ron (JCTA, 2001) proved the existence of a collection of vertex disjoint paths between comparable sub-level sets in the directed hypercube.  The main technical contribution of this paper is a new proof method that yields a generalization to their theorem: we prove the existence of {\em two} edge-disjoint collections of vertex disjoint paths. Our main conceptual contributions are conjectures on directed hypercube flows with simultaneous vertex and edge capacities of which our generalized Lehman-Ron theorem is a special case. We show that these conjectures imply {\em directed isoperimetric theorems}, and in particular, the robust directed Talagrand inequality due to Khot, Minzer, and Safra (SIAM J. on Comp, 2018). These isoperimetric inequalities, that relate the directed surface area (of a set in the hypercube) to its  distance to monotonicity, have been crucial in obtaining the best monotonicity testers for Boolean functions.  We believe our conjectures pave the way towards combinatorial proofs of these directed isoperimetry theorems.
\end{abstract}

\section{Introduction}
We let $d \geq 2$ denote a natural number.
The directed $d$-dimensional hypercube graph $H$ 
has vertices $V(H)$ which correspond to bit-vectors $x\in \{0,1\}^d$,
and edges $E(H)$ corresponding to pairs of bit-vectors $(x,y)$ that differ in exactly one coordinate.
Edges point from lower Hamming weight vectors to larger ones.
We use $x_i$ to denote the $i$th coordinate of vertex $x$.
There is a natural partial order on the vertices/elements of the Boolean hypercube:
$x \preceq y$ iff $\forall i, x_i \leq y_i$. Note that the directed hypercube is precisely the Hasse
diagram of this partial order. Equivalently, one can consider the vertices as subsets of $[d]$,
and the partial order is given by containment.

Two subsets $S, T$ of $V(H)$ are called a \emph{matched pair}
if there exists a {\em bijection} $\phi: S\to T$ such that $s\prec \phi(s)$ for all $s\in S$; we denote a matched pair by $(S,T;\phi)$. 
An early writeup of Goldreich-Goldwasser-Ron~\cite{GGR97} posed a routing question, inspired by questions
in monotonicity testing, which was solved by Lehman and Ron~\cite{LR01}. (More discussion in \Sec{mono}.)
They were interested in the following natural question: given any matched pair $(S,T; \phi)$, can one find (edge or vertex) {\em disjoint} directed paths\footnote{
In their paper, Lehman and Ron consider these paths to be disjoint chains of subsets.} from $S$ to $T$?
Remarkably, they proved that if all points in $S$ (resp. $T$) have the same Hamming weight, then the answer is affirmative: one can find {\em vertex disjoint paths} from $S$ to $T$! 
More precisely, for an integer $0\leq i\leq d$, let $L_i$ denote the {\em $i$th layer} of $H$, that is, $L_i := \{x\in V(H)~:~ ||x||_1 = i\}$. 
We refer to this beautiful statement as the "Lehman-Ron (LR) theorem".

\begin{theorem}[Lehman-Ron Theorem~\cite{LR01}]\label{thm:lr}
	Fix any two integers $i < j$. Let $(S,T;\phi)$ be a matched pair with $S\subseteq L_i$ and $T\subseteq L_j$. 
	Then, there are $|S| = |T|$ {\em vertex disjoint paths} between $S$ and $T$.
    We refer to such a set of vertex disjoint paths as an \emph{LR solution}.
\end{theorem}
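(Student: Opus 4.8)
The plan is to induct on the number of layers $j-i$. In the base case $j=i+1$, the conditions $s\prec\phi(s)$, $\|s\|_1=i$, $\|\phi(s)\|_1=i+1$ force $(s,\phi(s))$ to be an edge of $H$, so $\{(s,\phi(s)):s\in S\}$ is a set of $|S|$ length-one directed paths; since $\phi$ is a bijection between the disjoint layers $L_i$ and $L_{i+1}$, no two of these paths share a vertex, and they plainly cover $S$ and $T$, hence form an LR solution.

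For the inductive step I would reduce an $(i,j)$-instance to an $(i,j-1)$-instance by ``lowering $T$ one layer''. The goal is to produce a set $T'\subseteq L_{j-1}$ and a bijection $M\colon T\to T'$ with $\phi^{-1}(t)\prec M(t)\prec t$ for every $t\in T$ --- equivalently, $M(t)=t\setminus\{x\}$ for some coordinate $x\in t\setminus\phi^{-1}(t)$. Given such an $M$, put $\phi'\eqdef M\circ\phi\colon S\to T'$; this is a bijection, and $s=\phi^{-1}(\phi(s))\prec M(\phi(s))=\phi'(s)$, so $(S,T';\phi')$ is a matched pair with $S\subseteq L_i$ and $T'\subseteq L_{j-1}$. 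By the inductive hypothesis it has an LR solution $\calP'$: a family of $|S|=|T'|$ vertex-disjoint directed paths from $S$ onto $T'$. Being directed and running from layer $i$ to layer $j-1$, these paths lie entirely within layers $i,\dots,j-1$ and in particular avoid $L_j$. Now extend each $P\in\calP'$, which ends at some $t'\in T'$, by appending the edge $(t',M^{-1}(t'))\in E(H)$ (valid since $t'\prec M^{-1}(t')$, $t'\in L_{j-1}$, $M^{-1}(t')\in L_j$). The extended paths are still vertex disjoint: the original bodies are disjoint and contained in layers $\le j-1$, their $T'$-endpoints are distinct, and the appended vertices $M^{-1}(t')$ are distinct (as $M^{-1}$ is injective) and lie in $L_j$, so they meet neither a body nor one another. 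These $|S|=|T|$ paths cover $S$ (inherited from $\calP'$) and $T$ (since $M^{-1}$ is onto), so they are an LR solution for $(S,T;\phi)$, completing the induction.

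Everything rests on the existence of the bijection $M$ --- pushing the matched pair down one layer compatibly with $\phi$ --- and this is the step I expect to be the main obstacle. I would phrase it as a bipartite matching problem: let $B$ be the bipartite graph with parts $T$ and $L_{j-1}$ in which $t$ is joined to $u$ iff $\phi^{-1}(t)\subseteq u\subsetneq t$, so that $\deg_B(t)=j-i\ge 1$ and an admissible $M$ is exactly a $B$-matching saturating $T$. By Hall's theorem it suffices to show $|N_B(A)|\ge|A|$ for every $A\subseteq T$. I would attack this by a defect/uncrossing argument built on two structural features: (i) $B$ is \emph{linear}, i.e.\ distinct $t,t'\in T$ share at most one $B$-neighbor, since $t\setminus\{x\}=t'\setminus\{x'\}$ forces $t=Z\cup\{x\}$ and $t'=Z\cup\{x'\}$ for the common $(j-1)$-set $Z$; and (ii) $\phi$ is injective, so if $t_1,\dots,t_m\in T$ all share a $B$-neighbor $Z\in L_{j-1}$ then $\phi^{-1}(t_1),\dots,\phi^{-1}(t_m)$ are $m$ distinct $i$-subsets of $Z$, which both caps $m$ and forces the remaining neighbours of the $t_\ell$ to spread out. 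Turning (i)--(ii) into the Hall inequality for a minimum-size hypothetical violating set $A$ --- equivalently, showing that the matched-pair structure always leaves enough room to slide $T$ down one level, or that $(S,T;\phi)$ refines through \emph{every} intermediate layer $i<k<j$ --- is the delicate core of the argument, and is where I would expect the real work to lie.
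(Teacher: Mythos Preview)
Your inductive framework is sound and is essentially the original Lehman--Ron strategy: peel off one layer by finding an intermediate set $T'\subseteq L_{j-1}$ forming a matched pair with $S$, apply induction, then append the matching $T'\to T$. The paper's proof takes a genuinely different route. It considers a minimal counterexample, sets up a unit-vertex-capacitated flow network from $S$ to $T$, and uses max-flow/min-cut to obtain a vertex cut $C$ with $|C|<|S|$. Complementary slackness then yields a family $\cP$ of vertex-disjoint $S$--$T$ paths, each meeting $C$ exactly once and covering all of $C$. The contradiction is engineered via a \emph{gateway} argument: a gateway in layer $L_k$ is an $\cS$-side vertex off all paths in $\cP$ that still has a forward edge in the cover graph; one shows by a projection-and-walk argument along a fixed coordinate $r$ that gateways propagate from $L_i$ up to $L_{j-1}$, where a gateway would give an uncut $\cS\to\cT$ edge. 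The payoff of the paper's approach is that it generalizes to simultaneous edge-and-vertex capacities (\Cref{thm:dist2}); the layer-peeling induction does not obviously extend that way, since even the base case $j-i=2$ for two edge-disjoint LR solutions is already nontrivial.

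On the gap you flag: your observations (i) and (ii) are correct but, as stated, too weak to force Hall's condition. Linearity plus the degree bound from (ii) gives only $|N_B(A)|\ge |A|\,(j-i)\big/\binom{j-1}{i}$ by double-counting edges, and $\binom{j-1}{i}$ is typically much larger than $j-i$ (already for $i=2,\,j=4$ the bound is $2|A|/3$). The original Lehman--Ron argument for this step does not proceed by such local counting; it builds an auxiliary graph and invokes Menger's theorem (equivalently max-flow/min-cut), together with an induction on $|S|$ as well as on $j-i$, to produce the intermediate level set. So your plan is on the right track historically, but the ``delicate core'' you anticipate really does require a flow/Menger argument rather than the linearity-plus-injectivity heuristics you sketch.
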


\noindent
\emph{Remark. 
The paths may not respect the bijection $\phi$. More precisely, the above doesn't prescribe vertex disjoint paths from $s$ to $\phi(s)$ for all $s\in S$.
There exist concrete counterexamples (one is given in Lehman and Ron's paper attributed to Dan Kleitman) for such paths. A follow-up work by
proves that {\em edge}-disjoint paths from $s$ to $\phi(s)$ don't exist either~\cite{BCG+10}. \medskip
}

We give an alternate proof of this theorem. But more importantly, we use our new proof technique
to strengthen the LR theorem.
If the terminals are at distance at least $2$, there exist \emph{two}
edge-disjoint LR solutions.

\begin{theorem} \label{thm:dist2}
Fix any two integers $i < j$ with $j-i \ge 2$. Let $(S,T;\phi)$ be a matched pair with $S\subseteq L_i$ and $T\subseteq L_j$. 
Then, there are $2$ collections of {\em vertex disjoint paths} between $S$ and $T$, such that their union is {\em edge disjoint}.	
\end{theorem}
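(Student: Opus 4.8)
The plan is to recast \Thm{dist2} as a flow statement and then prove it by adapting the argument behind our new proof of \Thm{lr}. Regard the hypercube $H$, restricted to layers $L_i,\dots,L_j$, as a directed acyclic network; attach a super-source $\sigma$ with a capacity-$2$ arc $\sigma\to s$ for each $s\in S$, a super-sink $\tau$ with a capacity-$2$ arc $t\to\tau$ for each $t\in T$, give every hypercube vertex capacity $2$ and every hypercube edge capacity $1$. Two edge-disjoint LR solutions are exactly an integral feasible $(S,T)$-flow of value $2|S|$ that in addition decomposes into two vertex-disjoint path systems; equivalently, this is the $(\text{vertex capacity},\ \text{edge capacity})=(2,1)$ instance of the simultaneous-capacity hypercube-flow problem. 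The gap hypothesis $j-i\ge 2$ is exactly what makes the statement true: when $j-i=1$ every path is a single edge, so two edge-disjoint solutions would be two edge-disjoint perfect matchings inside the matched pair, which need not exist already when $|S|=1$.

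I would then run the LR construction ``with doubled demand'': an induction on the dimension $d$ (and, within a fixed $d$, on $j-i$ or on $|S|$) that mirrors the layer-peeling and Hall-type rematching used to prove \Thm{lr}. Fix a coordinate, split the matched pair $(S,T;\phi)$ into its two $(d-1)$-dimensional subcubes according to that coordinate, apply the inductive hypothesis on the pieces, and stitch the two systems back together along the coordinate-direction edges that were cut. The gap hypothesis supplies the slack needed for the stitching: since every path has length $\ge 2$, just above $L_i$ (and symmetrically just below $L_j$) there is a full layer of vertices through which the second system can be diverted along coordinate-direction edges disjoint from those used by the first, so the two rematchings can be chosen edge-disjointly. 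The base cases are $d$ small and $|S|$ small; the essential one is $j-i=2$, $|S|=1$---a single pair $s\prec t$ differing in at least two coordinates---where the two systems are simply the two length-$2$ paths from $s$ to $t$ through the two distinct middle vertices, displaying precisely the slack that the general argument must propagate.

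An alternative, and perhaps cleaner, route is to build the solutions one at a time: take any LR solution $\cP_1$ (which exists by \Thm{lr}), and find a second LR solution inside the residual graph $H\setminus E(\cP_1)$. By Menger's theorem this reduces to showing that deleting the edges of one LR solution cannot create an $S$--$T$ vertex separator of size $<|S|$. Since the deleted edges lie on $|S|$ vertex-disjoint paths of length $\ge 2$ and the hypercube is richly edge-connected between consecutive layers, one expects no such separator to appear; pinning this down---identifying the feature of $\cP_1$ (or the way $\cP_1$ must be selected) that forbids a small residual separator---is where the real work lies, and where the purely local viewpoint breaks down. Indeed, an arbitrary value-$2|S|$ feasible flow need not decompose into two vertex-disjoint systems at all: its conflict graph (paths adjacent when they share a hypercube vertex) can contain triangles, so one cannot simply take a max flow and $2$-color it, and the false statement at $j-i=1$ shows the argument must genuinely exploit $j-i\ge 2$. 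I expect this residual-connectivity step---equivalently, the choice of the right joint invariant in the inductive version---to be the main obstacle.
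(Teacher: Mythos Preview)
Your flow reformulation is exactly right and matches the paper: vertex capacity $2$, edge capacity $1$, and the goal is $2|S|$ units of integral flow. You are also right that an arbitrary max flow of value $2|S|$ need not decompose into two vertex-disjoint path systems, so one cannot simply invoke max-flow and $2$-color. But past this point your proposal does not contain a proof; it names two strategies and, to your credit, correctly flags that the hard step in each is left undone.

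Neither of your two strategies is the one that actually works in the paper, and I do not see how to complete either as stated. The induction-on-$d$ coordinate-split is not how the paper (nor the original LR proof) proceeds; a matched pair $(S,T;\phi)$ does not split cleanly along a coordinate (pairs with $s_r=0$, $\phi(s)_r=1$ straddle the cut), and the ``stitching'' step --- producing two edge-disjoint vertex-disjoint systems from inductive solutions on the halves --- is precisely the whole theorem again and is left entirely unspecified. The residual-graph route (build one LR solution $\cP_1$, then find a second inside $H\setminus E(\cP_1)$) you yourself identify as the obstacle; indeed there is no reason an \emph{arbitrary} LR solution leaves the residual graph with no $(S,T)$ vertex separator of size $<|S|$, and you give no mechanism for choosing $\cP_1$ well.

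What the paper actually does is quite different and is where the new idea lies. Take a minimal counterexample; by duality there is a mixed cut $(C,F)$ with $2|C|+|F|<2|S|$. Complementary slackness then gives a path family $\cP$ with very rigid structure: every path hits exactly one cut element, every $v\in C$ lies on exactly two paths, every $e\in F$ on exactly one. Call $v\in\cS$ a \emph{gateway} if it lies on at most one path of $\cP$ and has an out-edge not in $F$; a gateway in layer $L_{j-1}$ yields an uncut $\cS\to\cT$ edge, a contradiction. The crux is showing gateways propagate up one layer at a time. This is done by fixing the dimension $r$ of the out-edge at the current gateway, projecting along $r$ to form four sets $A,X,B,Y$ across three consecutive layers, and counting ``pink'' edges (edges of $\cP$ that do not flip coordinate $r$): one proves $\pink(X)>\pink(A)$ but $\pink(B)\le\pink(Y)$, while $\pink(A)=\pink(Y)$ and $\pink(X)=\pink(B)$ by projection, a contradiction. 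The gap hypothesis $j-i\ge2$ is used exactly once, to start the induction: it guarantees each $s\in S$ has at least two out-edges in the cover graph, so some $s$ (one on at most one path of $\cP$, which exists since $|\cP|<2|S|$) is a gateway. None of this structure --- the mixed cut, complementary slackness tightness, or the projection/pink-counting --- appears in your proposal.
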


\noindent
Lehman-Ron's proof of~\Cref{thm:lr} is by induction on $|S|$ and on the quantity $(j-i)$, the distance between the layers in which $S$ and $T$ lie.
The base case of $j-i=1$ is obvious as the bijection gives us the matching between $S$ and $T$.
The heart of the proof essentially shows the existence of a set of vertices $U$ either in layer $L_{j-1}$ or $L_{i+1}$ and two bijections $\phi':S\to U$ and $\phi'':U\to T$ such that
$(S,U;\phi')$ and $(U,T;\phi'')$ are matched pairs. This last part is a neat argument which uses Menger's theorem, which is a special case of the max-flow-min-cut theorem, on an auxiliary graph that they create. 
But how can one get \emph{two} edge disjoint LR solutions? The reader may notice that even the "base case'' of $j-i=2$, that is, when $S$ and $T$ are two levels apart
is itself non-trivial (indeed, we don't really know a much simpler way to solve this than the general case). And so, a new idea is needed to prove~\Cref{thm:dist2}.

Our proof of the Lehman-Ron theorem brings the {\em flow-cut duality} idea front and center. We note that~\Cref{thm:lr} is actually a statement about the structure of flows and cuts in the directed hypercube. More precisely, it states the existence of $|S|$ units of flow from vertices in $S$ to vertices in $T$ when all vertices have {\em vertex capacity} $1$ unit. 
We exploit the duality between cuts and flows, and more precisely the notion of {\em complementary slackness}, to give an alternate proof of the LR Theorem. 
In this flow-cut language, \Thm{dist2} states the existence of $2|S|$ units of flow when both edges and vertices have capacities ($1$ and $2$ units each, respectively). The existence of the two kinds
of capacities makes the argument slightly more involved, but the essence is still the same. For completeness, we show proofs of both \Cref{thm:lr} and \Cref{thm:dist2} in \Cref{sec:lr} and \Cref{sec:gen}, respectively.

We end our introduction with a natural conjecture that our techniques have been unable to solve. 
We discuss the connections between this conjecture and monotonicity testing in~\Cref{sec:mono}.
As the layers $L_i$ and $L_j$ move further apart, there should exist more collections of edge-disjoint LR solutions between $S$ and $T$.

\begin{conjecture}\label{conj:glr}
	Fix any two integers $i < j$ with $r := j-i$. Let $(S,T;\phi)$ be a matched pair with $S\subseteq L_i$ and $T\subseteq L_j$. 
Then, there are $r$ collections of {\em vertex disjoint paths} between $S$ and $T$, such that their union is {\em edge disjoint}.
\end{conjecture}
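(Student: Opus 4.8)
The plan is to recast \Cref{conj:glr} as a statement about integral flows in the directed hypercube carrying \emph{simultaneous} vertex and edge capacities, exactly in the spirit of the flow-cut viewpoint used for \Cref{thm:lr,thm:dist2}. Restrict $H$ to the layers $L_i, L_{i+1}, \dots, L_j$, attach a super-source feeding each $s \in S$ with exactly $r$ units and a super-sink draining each $t \in T$ by $r$ units, give every internal vertex capacity $r$ and every edge capacity $1$. A collection of $r$ edge-disjoint LR solutions is precisely a decomposition of a value-$r|S|$ integral $S$--$T$ flow $f$ in this network into $r$ sub-flows, each an integral flow of value $|S|$ obeying vertex capacity $1$ (i.e.\ a set of $|S|$ vertex-disjoint paths). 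So the conjecture splits into two sub-goals: (i) such a flow $f$ of value $r|S|$ exists, and (ii) $f$ can be chosen so that it decomposes into $r$ vertex-disjoint LR collections. The cases $r=1,2$ are \Cref{thm:lr} and \Cref{thm:dist2}.

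For (i) I would mimic and extend the complementary-slackness argument behind the proofs of \Cref{thm:lr,thm:dist2}, inducting on $r$ and on $|S|$. By max-flow--min-cut (using vertex-splitting to reduce vertex capacities to edge capacities, which keeps the network totally unimodular, hence $f$ integral), it suffices to show every mixed $S$--$T$ cut costs at least $r|S|$, where deleting an internal vertex costs $r$ and deleting an edge costs $1$. "Pure vertex" cuts cost $\ge r|S|$ by the vertex-Menger content of the LR theorem itself (any vertex set separating $S$ from $T$ has size $\ge |S|$), and "pure edge" cuts at the source side cost $|S|(d-i) \ge |S|(j-i) = r|S|$ since $j \le d$. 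The real work is ruling out cheap \emph{mixed} cuts, and here I would reuse Lehman and Ron's factorization idea: produce an intermediate set $U$ in an adjacent layer together with compatible bijections $(S,U;\phi')$ and $(U,T;\phi'')$, plus a split of the capacity budget, so that the distance-$1$ sub-instance and the distance-$(r-1)$ sub-instance each carry $r|S|$ units. The danger is that at distance $1$ one needs $r$ edge-disjoint perfect matchings from $S$ into $U$, which can fail once the out-degree $d-i$ is smaller than $r$; so $U$ may have to span more than one layer, and finding the right intermediate object (replacing Menger on a single auxiliary bipartite graph) is where a genuinely new lemma is likely needed.

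For (ii), the decomposition, I would write $f$ as $r|S|$ unit-flow paths $P_1,\dots,P_{r|S|}$ (possible in a DAG), each visiting exactly one vertex of every layer $L_k$, $i \le k \le j$, and one source and one sink. Forming the conflict graph $G$ on these paths (adjacent iff they share a vertex), vertex capacity $r$ makes $G$ a union of cliques of size $\le r$, one per vertex; an $r$-colouring of $G$ in which each colour class contains exactly one path per source (equivalently per sink) is exactly the desired decomposition. Such a constrained $r$-colouring is an instance of completing a partial Latin-square/list-colouring problem along the chain of layers, and it need \emph{not} exist for an arbitrary union of $\le r$-cliques. My hope would be to prove (i) in a strengthened form whose output flow $f$ restricts, on each consecutive pair of layers, to a "nested" (laminar) family, for which the colouring can be read off greedily layer by layer; alternatively, to carry the colouring through the inductive factorization of (i), re-colouring at each layer-peeling step while maintaining feasibility of the residual vertex capacities.

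The step I expect to be the genuine obstacle is exactly the interaction of (i) and (ii): making the factorization of step (i) compatible with a globally consistent $r$-colouring across \emph{all} intermediate layers simultaneously. For $r=2$ one can afford ad hoc re-pairing at the single split; for general $r$ one must juggle $r$ colours through $r-1$ peeling steps, and controlling that the vertex-capacity slack never falls below what the remaining colours require seems to demand a structural normal form for the value-$r|S|$ flow (nestedness, or a canonical "leftmost" flow) that we do not currently know how to guarantee in the hypercube. I would therefore concentrate the main effort on establishing such a normal form; once available, both the cut bound of (i) and the colouring of (ii) should follow by induction on $r$, with the LR theorem as the base case.
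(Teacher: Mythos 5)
You are addressing a statement that the paper itself leaves open: \Cref{conj:glr} is a conjecture, and the authors state explicitly that their techniques have been unable to prove it. Your proposal does not close it either; it is a program whose decisive step is missing. Your reduction of the conjecture to a capacitated-flow statement (edge capacities $1$, vertex capacities $r$, flow value $r|S|$) is sound and matches the paper's own framing, so the entire content is your sub-goal (i): showing that every mixed cut $(C,F)$ satisfies $r|C|+|F|\ge r|S|$. Your plan there is to re-run the Lehman--Ron factorization through an intermediate set $U$, but the paper already observes that this route does not obviously work even for $r=2$ (the ``base case'' $j-i=2$ is essentially as hard as the general case); that is precisely why \Cref{thm:dist2} is proved instead via the minimal-counterexample/complementary-slackness machinery of \Cref{lem:complslack}, the gateway vertices of \Cref{def:gateway}, and the pink-edge counting in \Cref{clm:ax} and \Cref{clm:by} --- an argument hand-tailored to capacities $1$ and $2$ that the authors could not extend to general $r$. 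You acknowledge that ``a genuinely new lemma is likely needed,'' and indeed that lemma is the conjecture. A smaller issue: your treatment of pure edge cuts only bounds the specific cut consisting of all edges leaving $S$; bounding arbitrary pure edge cuts needs an argument in the spirit of \Cref{thm:cspoin}, though this is moot since mixed cuts subsume it and are the open case.

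Where you have mislocated the difficulty is sub-goal (ii). The decomposition of a value-$r|S|$ integral flow $f$ (edge capacities $1$, vertex capacities $r$) into $r$ edge-disjoint LR collections does not require nested/laminar normal forms or Latin-square completion; it follows from a standard peeling argument. Consider the polytope of flows $g$ with $0\le g\le f$ edgewise, value $|S|$, vertex throughput at most $1$ everywhere, and throughput at least $1$ at every vertex that $f$ saturates (throughput exactly $r$, which includes all of $S$ and $T$). After the usual vertex-splitting this is a flow polytope with integer lower and upper bounds, hence integral, and it is nonempty because $f/r$ lies in it; so an integral $g$ exists. Then $g$ is a set of $|S|$ vertex-disjoint $S$--$T$ paths, one per source and sink, and $f-g$ is an integral flow of value $(r-1)|S|$ respecting vertex capacities $r-1$ and the same unit edge capacities, so induction on $r$ produces the $r$ collections, pairwise edge-disjoint because each hypercube edge carries one unit of $f$ in total. (This also confirms, for $r=2$, the equivalence between the flow statement and \Cref{thm:dist2} that the paper asserts.) So the colouring obstacle you anticipate evaporates, and all of the difficulty of \Cref{conj:glr} is concentrated in the mixed-cut lower bound, for which neither your proposal nor the paper has an argument.
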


\noindent
\textbf{Other LR connections.} Recent work has generalized the LR theorem different directions in~\cite{BaGr+24}.
These results find vertex disjoint paths that "cover" any collection of points specifying certain properties.
Consider a subset $X$ of the hypercube that is partitioned into subsets of paths (or chains).
Meaning, we partition $X = \bigcup_i X_i$ such that, the vertices of $X_i$ can be ordered
according to $\prec$. (Moreover, this is the partition that minimizes the number of sets.)
In the vanilla LR setting, each $X_i$ is just $(s, \phi(s))$ for each $s \in S$. The main
theorem of~\cite{BaGr+24} shows that $X$ can be covered by a collection of vertex disjoint paths.
A nice implication of their result is that \Thm{lr} holds even if $S$ and $T$ were not contained
in levels, but were antichains.

We also note that the routing perspective in \Sec{mono} answers a question
of Sachdeva from a collection of open problems on Boolean functions~\cite{open14}
(Pg 19, "Routing on the hypercube"). We discuss more in \Sec{mono}.

\section{Alternate proof of the Lehman-Ron Theorem} \label{sec:lr} \label{sec:setup}

As a warm-up, we set up the main idea with a proof of the Lehman-Ron theorem.
We begin with an important definition.

\begin{definition} \label{def:cover} Given two sets $S$ and $T$ of the directed hypercube,
	the \emph{cover} graph $G_{S,T}$ is formed by the union of all paths from $S$ to $T$.
\end{definition}
\noindent
In other words, the cover graph is the subset of the hypercube, that contains all vertices $v$
such that $s \prec v \prec t$ (for $s \in S, t \in T$). The cover graph inherits "layers'' via intersection with the original hypercube layers. In particular, layer $L_i$ of the cover graph is only $S$ and layer $L_j$ is only $T$. \medskip

For the sake of contradiction, consider the {\em minimal counterexample} of \Cref{thm:lr} in terms of $|S|+|T|$. 
Consider the following flow network which contains $V(H)$ and also supernodes \supS and \supT.
\supS has a directed edge to every vertex in $s\in S$, and every vertex $t\in T$ has a directed edge to \supT.
We construct a flow network by setting the following {\em vertex} capacities to $G_{S,T}$: the supernodes have infinite capacity while
every vertex in $V(H)$ has capacity $1$. Since $(S,T)$ is a counterexample, by the theory of flows and cuts, the maximum \supS,\supT flow in this 
vertex-capacitated network is $< |S|$. And so, using flow-cut duality, we know that there exists a {\em cut} $C\subseteq V(H)$ such that 
(a) $|C| < |S|$, (b) {\em every} path from \supS to \supT contains a $C$-vertex. Call a path {\em cut-free} if it doesn't contain a vertex from $C$.
We can partition all vertices into three sets $\cS, C, \cT$, where $\cS$ contains all vertices that are reached by
a cut-free path from \supS, and vertices of $\cT$ can reach \supT by a cut-free path. In particular, there is no edge from a vertex in $\cS$ to a vertex in $\cT$;
all edges leaving $\cS$ enter $C$, and all edges entering $\cT$ originate from $C$.
We make a quick observation using the minimality of our counterexample.
\begin{lemma}\label{lem:minim-lr}
	$C$ is disjoint from $S\cup T$.
\end{lemma}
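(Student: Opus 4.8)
The plan is to derive a contradiction from the assumed minimality of $(S,T;\phi)$. Suppose, towards a contradiction, that $C$ meets $S\cup T$. By the top–bottom symmetry of the hypercube the two cases are interchangeable, so I will spell out only the case that there is a vertex $s\in C\cap S$; for a vertex $t\in C\cap T$ one deletes $\phi^{-1}(t)$ and $t$ instead of $s$ and $\phi(s)$, and the argument below applies verbatim with the roles of the layers $L_i$ and $L_j$ swapped.

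First I would pass to the smaller matched pair $(S',T';\phi')$ given by $S'=S\setminus\{s\}$, $T'=T\setminus\{\phi(s)\}$ and $\phi'=\phi|_{S'}$. This is again a matched pair, and $|S'|+|T'|=|S|+|T|-2<|S|+|T|$, so by minimality of our counterexample $(S',T')$ is \emph{not} a counterexample: the network on $G_{S',T'}$ with unit vertex capacities (and infinite-capacity supernodes) carries $|S'|$ units of \supS,\supT flow, equivalently there are $|S'|$ vertex disjoint paths from $S'$ to $T'$.

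The main point is then to show that $C':=C\setminus\{s\}$ is still a cut in this smaller network; since $|C'|=|C|-1<|S|-1=|S'|$, this would make $(S',T')$ a counterexample after all, contradicting the previous paragraph. Because $G_{S',T'}\subseteq G_{S,T}$, every directed path $P$ from $S'$ to $T'$ is in particular an $S$-to-$T$ path in $G_{S,T}$ and hence contains a vertex of $C$. It remains to check that $P$ avoids $s$: every directed path from layer $L_i$ to layer $L_j$ uses exactly one vertex of each layer $L_k$ with $i\le k\le j$ (each hypercube edge raises the Hamming weight by exactly one), and $P$ already uses its start vertex, which lies in $S'\subseteq L_i$ and is distinct from $s$; so $P$ cannot pass through the $L_i$-vertex $s$. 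Therefore $P$ meets $C'$, i.e.\ $C'$ separates \supS from \supT in the network for $(S',T')$ — contradiction. Hence $C\cap S=\emptyset$, and symmetrically $C\cap T=\emptyset$.

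I do not expect a genuine obstacle here: the only substantive ingredient is the elementary layer observation that a directed path from $L_i$ to $L_j$ cannot revisit $L_i$ (or $L_j$), which is precisely what licenses removing $s$ from the cut without destroying the cut property. The lemma is a routine but necessary normalization step, ensuring that in the minimal counterexample the cut $C$ lies strictly between the two terminal layers, which is what the subsequent argument will exploit.
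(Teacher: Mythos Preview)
Your argument is correct and follows the same approach as the paper: delete the offending terminal and its partner under $\phi$, and observe that $C\setminus\{s\}$ remains a valid cut of size $<|S'|$ for the smaller instance, contradicting minimality. You simply spell out the one point the paper leaves implicit, namely why no $S'$--$T'$ path can pass through $s$ (the layer observation that a directed hypercube path from $L_i$ to $L_j$ meets $L_i$ only at its start vertex).
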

\begin{proof}
	If $C$ contains a vertex $S \cup T$, then one obtains a smaller counterexample. If $v\in C\cap S$, then 
	$S' := S\setminus \{v\}$, $T' := T \setminus \{\phi(v)\}$ and $\phi' := \phi_{|_{S'}}$ forms a matched pair $(S',T';\phi')$ which is also a counterexample:
	the cut $(C-\{v\})$ is a valid cut of value $|C| < |S|-1 = |S'|$. So, $C\cap S = \emptyset$. The proof of $C\cap T = \emptyset$ is analogous.
\end{proof}

\noindent
Our setup so far is a restructuring of the original Lehman-Ron proof. The following lemma is where we start to differ.
This lemma is a consequence of {\em complementary slackness} from the theory of linear optimization, and is the central tool for our new proof. 

\begin{lemma}\label{lem:complslack-lr}
	There exists a collection of {\em vertex disjoint} paths $\cP$ where every path $p\in \cP$ begins at a vertex in $S$ and ends at a vertex in $T$ and
	\begin{asparaitem}
		\item Every path $p\in \cP$ contains {\em exactly} one vertex in $C$. 
		\item Every vertex $v\in C$ is in {\em exactly} one path in $\cP$. 
	\end{asparaitem}
\end{lemma}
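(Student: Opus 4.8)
The plan is to read Lemma~\ref{lem:complslack-lr} as the complementary-slackness pairing between an optimal integral maximum flow and the minimum cut $C$ produced above. First I would write the max-flow problem in its \emph{path} formulation on the vertex-capacitated network: a primal variable $f_p \geq 0$ for every directed path $p$ from \supS to \supT, objective $\max \sum_p f_p$, and one constraint $\sum_{p \ni v} f_p \leq 1$ for each $v \in V(H)$ (the supernodes being uncapacitated). The dual assigns $y_v \geq 0$ to each $v\in V(H)$, minimizes $\sum_v y_v$, and has one constraint $\sum_{v \in p} y_v \geq 1$ per path $p$. Because all capacities are integral, there is an integral optimal flow; because every path constraint has $0/1$ coefficients, one may take the dual optimum to be $\bone_C$, the indicator of a \emph{minimum} vertex cut $C$ — which is precisely the cut handed to us by flow–cut duality, with $|C|$ equal to the max-flow value $k := |C| < |S|$.

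Next I would invoke complementary slackness for this optimal primal–dual pair. On the dual side, $y_v = 1 > 0$ for each $v \in C$ forces the corresponding primal constraint to be tight, so exactly one unit of flow passes through every vertex of $C$. On the primal side, $f_p > 0$ forces $\sum_{v\in p} y_v = 1$, i.e.\ $p$ meets $C$ in exactly one vertex. Since the hypercube — hence the whole network, with \supS a source and \supT a sink — is a DAG, the integral optimal flow has no cycles and decomposes into exactly $k$ paths $p_1,\dots,p_k$ that are pairwise vertex-disjoint (unit vertex capacities). Each $p_i$ meets $C$ in exactly one vertex by the primal condition; and each $v \in C$ carries one unit of flow, so (the flow through $v$ being the number of $p_i$ through $v$) exactly one $p_i$ passes through it. Finally I would delete \supS and \supT from each $p_i$: since \supS has edges only into $S$, only $T$ has edges into \supT, and $S \cap T = \emptyset$ because $i < j$, each truncated path runs from a vertex of $S$ to a vertex of $T$ inside $G_{S,T}$, the truncated paths stay vertex-disjoint, and the $C$-incidences are unchanged. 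Taking $\cP = \{p_1,\dots,p_k\}$ gives the lemma.

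I do not expect a genuinely hard step; the two points that need care are (i) phrasing the vertex-capacitated flow so that LP duality, complementary slackness, and integrality apply verbatim — the standard "split each $v$ into an in-copy and an out-copy joined by a unit-capacity edge" reduction handles this — and (ii) using that $C$ is a \emph{minimum} cut, since a non-minimum cut cannot be met exactly once by $|C|$ disjoint paths; this holds here because flow–cut duality equates the min-cut value with the max-flow value $<|S|$. An equivalent, Menger-flavoured write-up would avoid LP language: take $k = |C|$ internally-vertex-disjoint \supS--\supT paths (vertex Menger), observe each meets $C$ at least once, and conclude from the double count $k \leq \sum_i |p_i \cap C| \leq |C| = k$ that each path meets $C$ exactly once and $C$ is fully covered. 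I would present the complementary-slackness version in the main text, since that is the viewpoint the rest of the paper builds on.
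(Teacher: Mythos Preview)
Your proposal is correct and matches the paper's approach. The paper does not prove \Cref{lem:complslack-lr} explicitly, but its proof of the generalized \Cref{lem:complslack} invokes complementary slackness and then spells it out via exactly the double-count you call the ``Menger-flavoured'' version: $|\cP| \le \sum_p c_p = \sum_{v\in C} k_v \le |C| = |\cP|$, forcing all inequalities tight. Your LP/path-formulation derivation and your caution that $C$ must be a \emph{minimum} cut (which the paper leaves implicit when it introduces $C$) are both on point.
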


\noindent
Note that all these paths are in the cover graph $G_{S,T}$.
Using the above collection of paths, we make a key definition.
\begin{definition}\label{def:gateway-lr}
	A vertex $v$ is a {\em gateway} if (i) $v \in \cS$, (ii) $v$ doesn't lie on any path in $\cP$,
	and (iii) there is at least one edge $(v,w)$ in the cover-graph.
\end{definition}
\noindent
The Lehman-Ron theorem, \Cref{thm:lr}, follows directly from the following lemma.

\begin{lemma}\label{lem:main-lr}
	For all $i\leq k\leq j-1$, the $k$th layer $L_k$ of the cover-graph contains a gateway vertex.
\end{lemma}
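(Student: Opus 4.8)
The plan is to induct on $k$ from $k=i$ up to $k=j-1$, with the induction arranged so that the top case $k=j-1$ already forces \Cref{thm:lr}: a gateway $v$ in $L_{j-1}$ of the cover graph has a cover-graph out-edge $(v,w)$ with $w\in L_j=T$, and if $w\notin C$ then $\supS\to\cdots\to v\to w\to\supT$ is a cut-free path (the prefix to $v$ is cut-free because $v\in\cS$), contradicting that $C$ is a cut; hence $w\in C\cap T$, which is impossible by \Cref{lem:minim-lr}, so the assumed counterexample cannot exist. For the base case $k=i$, the layer $L_i$ of the cover graph is $S$ itself; by \Cref{lem:complslack-lr} the collection $\cP$ consists of $|\cP|=|C|<|S|$ vertex-disjoint paths, each meeting $L_i=S$ in exactly one vertex, so at least $|S|-|C|\ge 1$ vertices of $S$ lie on no path of $\cP$. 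Any such vertex $s$ lies in $\cS$ (via the edge $\supS\to s$, using $s\notin C$ from \Cref{lem:minim-lr}) and has a cover-graph out-edge towards $\phi(s)$, so it is a gateway.

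It is convenient to record the equivalence that $L_m$ contains a gateway if and only if $|L_m\cap\cS|>|C\cap L_{>m}|$, where $L_{>m}$ denotes the union of the layers above $L_m$. Indeed, a path of $\cP$ has its layer-$m$ vertex in $\cS$ exactly when its unique $C$-vertex lies above layer $m$ (otherwise that vertex lies in $C\cup\cT$), and since the $C$-vertices of $\cP$ range bijectively over $C$ and the paths are vertex-disjoint, $\cP$ occupies precisely $|C\cap L_{>m}|$ of the vertices of $L_m\cap\cS$; an uncovered vertex of $L_m\cap\cS$ is then a gateway (it has a cover-graph out-edge as long as $m\le j-1$), and conversely a gateway is uncovered. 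For the inductive step, let $v$ be a gateway in $L_k$ with $i\le k\le j-2$, and pick a cover-graph edge $(v,w)$ out of $v$. Since $v\in\cS$ and there is no cover-graph edge from $\cS$ to $\cT$, $w$ lies in $L_{k+1}\cap(\cS\cup C)$. If $w\in\cS$ and $w$ lies on no path of $\cP$, then $w$ is a gateway in $L_{k+1}$ and we are done. The remaining case, in which $w$ lies on a path of $\cP$ (the path whose $C$-vertex is $w$, when $w\in C$), is the heart of the matter; by the equivalence above it suffices to prove the layer inequality $|L_{k+1}\cap(\cS\cup C)|\ge|L_k\cap\cS|$, since together with the inductive hypothesis $|L_k\cap\cS|>|C\cap L_{>k}|$ and the identity $|C\cap L_{>k}|=|C\cap L_{k+1}|+|C\cap L_{>k+1}|$ it gives $|L_{k+1}\cap\cS|>|C\cap L_{>k+1}|$, a gateway in $L_{k+1}$.

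Proving this layer inequality is where I expect the real difficulty, and where flow-cut duality re-enters. Every vertex of $L_k\cap\cS$ has a cover-graph out-edge, and such an edge necessarily ends in $L_{k+1}\cap(\cS\cup C)$, so the inequality would follow from a matching of $L_k\cap\cS$ into $L_{k+1}\cap(\cS\cup C)$ along cover-graph edges; by Hall's theorem this reduces to $|N^{+}(X)|\ge|X|$ for every $X\subseteq L_k\cap\cS$, where $N^{+}(X)$ is the cover-graph out-neighbourhood of $X$. Given a Hall violation $X$, the natural move is to build a strictly smaller $\supS$--$\supT$ cut than $C$ by deleting from $C$ those $C$-vertices whose every cut-free approach from $\supS$ passes through $X$ and inserting $N^{+}(X)$ in their place, and one checks the resulting vertex set is again a cut. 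The delicate point, and the part I expect to be the genuine obstacle, is making this exchange strictly size-decreasing, i.e.\ showing $X$ ``screens off'' at least $|X|$ vertices of $C$: a crude count only recovers the non-strict bound $|L_k\cap\cS|\ge|C\cap L_{>k}|$, so here one must really use the minimality of the counterexample (allowing $C$ to be taken as a genuine minimum cut) together with the complementary-slackness structure of $\cP$ relative to $C$.
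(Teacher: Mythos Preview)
Your reformulation is clean and correct: the equivalence ``$L_m$ has a gateway iff $|L_m\cap\cS|>|C\cap L_{>m}|$'' is right, and the reduction of the inductive step to the layer inequality $|L_{k+1}\cap(\cS\cup C)|\ge |L_k\cap\cS|$ is valid. The gap is exactly where you say it is, and it is a real gap, not a technicality. The cut-swap you sketch does not close: given a Hall violator $X\subseteq L_k\cap\cS$ with $|N^+(X)|<|X|$, adding $N^+(X)$ to $C$ only lets you delete those $c\in C$ whose \emph{every} cut-free approach from \supS passes through $X$, and there is no reason this set has size $\ge |X|$ (or even $>|N^+(X)|$). Vertices of $C$ above layer $k$ can be reached via $L_k\cap\cS$ vertices outside $X$, so nothing forces them to be removable. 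Neither minimality of $C$ nor the complementary-slackness paths $\cP$ give you this count: $\cP$ only matches $|C\cap L_{>k}|$ of the vertices in $L_k\cap\cS$ into $L_{k+1}\cap(\cS\cup C)$, which is exactly the non-strict bound you already have, and the gateway vertices in $L_k$ are precisely the ones $\cP$ does not help with.

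The paper proves the step by a completely different mechanism that uses the hypercube's coordinate symmetry rather than generic flow--cut reasoning. From a gateway $v^\star\in L_k$ with outgoing cover-graph edge along dimension $r$, it defines $A=\{a\in \cS\cap L_k: a_r=0,\ \Pi_r(a)\in G_{S,T}\}$, then $X=\Pi_r(A)\subseteq L_{k+1}$, then $B\subseteq L_{k+2}$ as the $\cP$-successors of $X$, then $Y=\Pi_r(B)\subseteq L_{k+1}$. Because $\Pi_r$ is a bijection, $|A|=|X|$ and $|B|=|Y|$; because no $x\in X$ is a gateway (maximality of $k$), every $x$ lies on some path of $\cP$, so $|B|=|X|$; hence $|Y|=|A|$. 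The key stroke is that every $(a,y)$ edge with $y\in Y$ is the $\Pi_r$-image of an $(x,b)$ edge lying on a path of $\cP$, so each $y\in Y$ has a $\cP$-predecessor in $A$. This forces all $|Y|=|A|$ vertices of $A$ to lie on paths of $\cP$, contradicting that $v^\star\in A$ is a gateway. The content here is the projection $\Pi_r$: it is what turns the single gateway edge into a bijective correspondence between layers, and it is the idea your Hall/cut-swap outline is missing.
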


\begin{proof}[\bf Proof of~\Cref{thm:lr}]
	Consider the gateway vertex $v \in L_{j-1} \cap \cS$ with edge $(v,w)$ in the cover-graph. Note $w\in T$ and therefore
    in $\cT$. There is an edge from $\cS$ to $\cT$. Contradiction. Hence, there is no (minimal) counterexample
    to \Thm{lr}.
\end{proof}
\noindent
Before giving the formal proof of~\Cref{lem:main-lr} directly, let us describe the main idea which uses the symmetry of the hypercube.
First, let us observe the layer $L_i$, that is $S$, contains a gateway vertex $s$. Indeed, there are $\leq |S|-1$ paths in $\cP$
and so there is some $s\in S$ not in any of these paths. Furthermore, all edges that lead $s$ to $\phi(s)$ lie in the cover-graph.

Now, let's see
how to get a gateway in the next layer $L_{i+1}$. Consider any edge
$(s,x^{{(1)}})$ in $G_{S,T}$, and suppose this edge corresponds to projecting according to some dimension $r$. That is $s_r = 0$ and $x^{(1)}_r = 1$.
If $x^{(1)}$ is not in any path in $\cP$, we have discovered the desired gateway in $L_{i+1}$. 
Otherwise, $x^{(1)}$ lies on some path, say, $P\in \cP$. Follow $P$ {\em forwards} for a single edge from $x^{(1)}$, to get to $x^{(2)}$.
Note that $x^{(2)} \in L_{i+2}$ and has $r$th-coordinate $1$. Now, one can project "down" on the $r$-coordinate
to get $x^{(3)}\in L_{i+1}$. Observe that $(s, x^{(3)})$ is a {\em projection} of the edge $(x^{(1)},x^{(2)})$ along the $r$th-dimension
and hence is an edge of the cover-graph. Next, observe that $x^{(3)}$ cannot be in $\cT$, so either $x^{(3)} \in \cS$ or $x^{(3)} \in C$. If $x_3 \in \cS$
and not on any path in $\cP$, we are done. Otherwise $x^{(3)}$ lies on some other path $Q \in \cP$. 
We now walk {\em backwards} along $Q$, to get $x^{(4)} \in S$. Noting that $x^{(4)}$ has $r$-coordinate $0$,
we can redo the entire process above. Observe that each "step" proceeds along a \emph{matching}. Either we project, walk from $L_{i+1}$ to $L_{i+2}$
using a path in $\cP$, or walk backward from $L_{i+1}$ to $L_{i}$ using a path in  $\cP$. Each of these is using a matching edge,
and no vertex is ever visited twice in the entire process. Hence, this process must terminate, at which point a gateway is discovered. 
And then one uses the same idea to obtain a gateway vertex in $L_{i+2}$, and so on.
One can convert this idea into a formal proof, but it becomes notationally cumbersome. 
A cleaner proof method is to consider a potential "fixed point" of this process, and prove a contradiction. 

\begin{proof}[\bf Proof of~\Cref{lem:main-lr}]
    Fix a collection of paths $\cP$ as given by \Cref{lem:complslack-lr}.
	As argued above, $L_i$ has a gateway vertex.
	Let $k \in [i,j-1]$ be the largest value such that $L_k$ contains a gateway vertex. If $k = j-1$, we are done. So suppose
	that $k < j-1$. We now engineer a contradiction.
	Let $v^\star$ be a gateway in $L_k$. So $v^\star \in \cS$ and has an edge $(v^\star,w)$ leaving it. Let $r$ be the dimension of this edge
	implying $v^\star_r = 0$ and $w_r = 1$. Let $\Pi_r$ be the projection operator which flips the $r$th coordinate; so $\Pi_r(v^\star) = w$ and
	vice versa. Define the following sets; we give a illustration for convenience where the pink highlighted edges participate in paths of $\cP$.
	
	\begin{minipage}{0.5\textwidth}
		\begin{asparaitem}
			\item $A = \{a ~:~ a \in \cS \cap L_k, a_r = 0 ~\text{and}~ \Pi_r(a)\in G_{\cS,\cT}\}$. 
			\item $X = \Pi_r(A) = \{\Pi_r(a)~:~a\in A\}$. 
			\item $B = \{b ~:~ (x,b) \in P,~\text{for some} ~x \in X, P\in \cP\}$. 
			\item $Y = \Pi_r(B) = \{\Pi_r(b)~:~b\in B\}$. 
		\end{asparaitem}
	\end{minipage}
	\begin{minipage}{0.45\textwidth}
		\includegraphics[trim = 150 0 0 0, clip, scale=0.35]{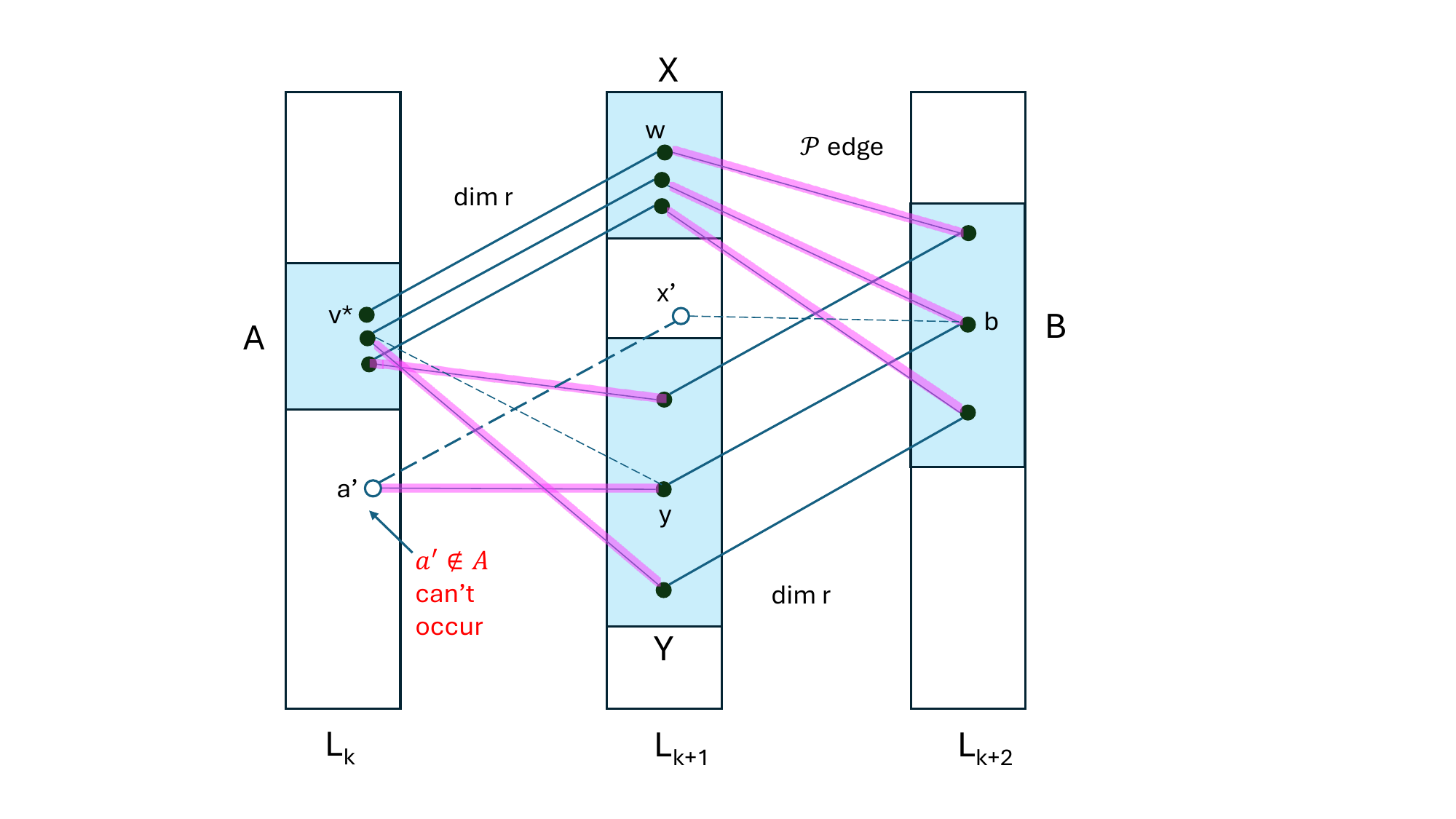}
	\end{minipage}		
	
	\noindent
	Observe that $X \cup Y \subseteq L_{k+1}$, and $B \subseteq L_{k+2}$. We next argue these subsets lie in the cover-graph.
	By definition of $A$, $X\subseteq G_{\cS,\cT}$, and since $B$ is obtained via paths from $X$, we get $B\subseteq G_{\cS,\cT}$.
	Finally, for any $y\in Y$, note that $b = \Pi_r(y)$ lies in $B$, and there's a path $(a,x, b)$ for some $a\in A$ and $x = \Pi_r(a)$.
	But note that $(a,y)$ edge is the $r$-projection of $(x,b)$, and so $(a,y,b)$ is a path implying $y\in G_{\cS,\cT}$. Thus, $Y$ also lies in the cover-graph.
	
	Since $X$ is a projection of $A$, we get $|X| = |A|$.
	Consider any $x\in X$. Since $x\in L_{k+1}$, by our choice of $k$, $x$ isn't a gateway vertex. Since $(a,x)$ is an edge for $a = \Pi_r(x)$, 
	we have $x\in \cS$ or $x\in C$. In the former case since $x$ isn't a gateway vertex, and in the latter case by~\Cref{lem:complslack-lr}, there
	exists $P\in \cP$ with $x\in P$. And so,  $|B| = |X|$. By projection property, $|Y| = |B|$, and so following the above chain of equalities, we get $|Y| = |A|$.
	
	Next consider any $y\in Y$. As argued above, there is a vertex $a\in A$ such that $(a,y)$ is an edge in the cover-graph (the projection of $(x,b) \in P$ where $b=\Pi_r(y)$).
	Since $y\in L_{k+1}$, by our choice of $k$, $y$ isn't a gateway vertex. Since $(a,y)$ is an edge, we must have $y\in \cS$ or $y\in C$. 
	In the former case since $y$ isn't a gateway vertex, and in the latter case by~\Cref{lem:complslack-lr}, there
	exists $Q\in \cP$ with $y\in Q$. Let $(a',y)$ be the edge in $Q$ taking $y$ "backward'' along $Q$. We claim that $a' \in A$. 
	If $y\in \cS$, then $a'\in \cS$ since $(a',y)$ is an edge; if $y\in C$
	then by~\Cref{lem:complslack-lr}, $a'\in \cS$ (the path $Q$ doesn't contain two cut-vertices). Since $(a',x' = \Pi_r(a'))$ lies in the cover graph since the $r$-projection of the $(a',y)$ lies there, 
	we get $x'$ lies in the cover-graph, implying $a'$ must lie in $A$. The above figure illustrates this.
	In short, there are $|Y|$ paths of $\cP$ that contain vertices of $A$. Since $|Y| = |A|$ and since these paths are all vertex-disjoint, we conclude {\em all} vertices in $A$ are present in some 
	path of $\cP$. However, the gateway vertex $v^\star \in A$ and doesn't lie on any path of $\cP$. Contradiction.
\end{proof}

\section{Proof of the generalization~\Cref{thm:dist2}}\label{sec:gen}
We now prove the generalization of the Lehman-Ron theorem using the proof strategy above.  As before, we start with a minimal counter-example and use it to construct a flow network.
We have the same directed graph as in the previous section with supernodes \supS and \supT, with every edge incident to supernodes having infinite capacity.
The crucial difference is that we have both vertex and edge capacities. Each edge has capacity one, and each vertex has capacity two.
(Alternately, it costs "one unit" to cut an edge, but "two units" to cut a vertex.)

Since we have a counter-example, again using the theory of flows, the maximum flow in this network is $< 2|S|$. 
And thus, by duality, we posit that there exists a {\em pair} $(C,F)$ with $C\subseteq V(H)$ and $F\subseteq E(H)$, such that (a) $2|C| + |F| < 2|S|$, (b) every path from $s$ to $t$ either contains a $C$-vertex or an $F$-edge or both. A path from a vertex $u$ to $v$ is now called {\em cut-free} if it contains neither a vertex from $C$ nor an edge from $E$.
As before, we can partition all vertices into three sets $\cS, C, \cT$, where $\cS$ contains all vertices that are reached by
a cut-free path from $S$, and vertices of $\cT$ can reach $T$ by a cut-free path. Note that the edges of $F$ are from vertices in $\cS$ to vertices in $\cT$.
And, as before, by minimality of the counter-example, the following simple observation holds.

\begin{lemma}\label{lem:minim} (i) The cut set $C$ is disjoint from $S \cup T$. (ii) There exists a mincut $(C,F)$ such that no vertex participates in more than one edge of $F$.
\end{lemma}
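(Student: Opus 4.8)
The plan is to prove the two parts separately, each by exploiting the minimality of the counter-example $(S,T;\phi)$ in $|S|+|T|$: part (i) follows the pattern of~\Cref{lem:minim-lr}, and part (ii) adds a short uncrossing-style exchange argument to cope with the edge capacities.

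For part (i), suppose for contradiction that some $v\in C$ lies in $S\cup T$; say $v\in C\cap S$ (the case $v\in C\cap T$ is symmetric). Set $S' = S\setminus\{v\}$, $T' = T\setminus\{\phi(v)\}$, and $\phi' = \phi|_{S'}$; this is again a matched pair with $S'\subseteq L_i$, $T'\subseteq L_j$, $j-i\ge 2$, and it is strictly smaller in $|S|+|T|$. I claim $(C\setminus\{v\}, F)$ is a valid cut for $(S',T';\phi')$: any path from a vertex of $S'$ to a vertex of $T'$ is a monotone chain whose endpoints lie in $L_i$ and $L_j$, so it cannot pass through $v\in L_i$ or $\phi(v)\in L_j$ as an internal vertex; hence such a path was cut by $(C,F)$ through a vertex other than $v$, or through an edge of $F$, and is therefore cut by $(C\setminus\{v\},F)$. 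Its value is $2|C\setminus\{v\}| + |F| = 2|C| + |F| - 2 < 2|S| - 2 = 2|S'|$, so the maximum flow for the smaller instance is below $2|S'|$, making $(S',T';\phi')$ a strictly smaller counter-example — a contradiction.

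For part (ii), I would first record that in \emph{any} mincut $(C,F)$ no vertex of $C$ is incident to an edge of $F$: if $v\in C$ and $e\in F$ shared an endpoint, then every path through $e$ already passes through $v$, so $(C, F\setminus\{e\})$ would be a valid cut of strictly smaller value $2|C| + |F| - 1$, contradicting optimality. Now, among all mincuts, fix one that in addition minimizes $|F|$, and suppose some vertex $v$ lies on two distinct edges $e_1,e_2\in F$; by the previous observation $v\notin C$. Replace $(C,F)$ by $(C\cup\{v\}, F\setminus\{e_1,e_2\})$. This is still a valid cut, since any path previously cut through $e_1$ or $e_2$ passes through $v$; and its value is $2(|C|+1) + (|F|-2) = 2|C| + |F|$, so it is again a mincut — but it uses two fewer edges, contradicting the choice of $|F|$. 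Hence no vertex of the chosen mincut lies on two edges of $F$, and part (i) applies to this mincut as well, giving $C\cap(S\cup T) = \emptyset$ for it.

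The only points needing care are the factor-of-two bookkeeping in the cut values — it is precisely because ``merging two $F$-edges at $v$ into the single vertex $v$'' leaves $2|C|+|F|$ unchanged that the argument must break ties by $|F|$ rather than work directly with the cut value — together with the layer observation used in part (i), which relies on $S$ and $T$ lying in single levels so that no $S'$-$T'$ path can sneak through the removed terminals $v$ or $\phi(v)$. Neither is a genuine obstacle: the lemma is the direct analogue of~\Cref{lem:minim-lr}, with the extra edge capacities absorbed by the one-line exchange above.
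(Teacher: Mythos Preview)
Your proof is correct and follows essentially the same route as the paper. Part (i) is handled identically (the paper simply points back to \Cref{lem:minim-lr}), and part (ii) is the same two-edges-for-one-vertex exchange; the paper phrases it as ``remove these edges from $F$, add $v$ to $C$, the cut value does not increase,'' whereas you make the termination explicit via the secondary minimization of $|F|$, and you add the (correct, but unneeded) preliminary observation that no $C$-vertex is incident to an $F$-edge.
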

\begin{proof}
	Proof of (i) is exactly as in~\Cref{lem:minim-lr}.
	Suppose $v$ participates in at least two edges of $F$. Observe that any $S$-$T$ path through any of these edges must go via $v$.
	Hence, we can remove these edges from $F$, add $v$ to $C$, and preserve the fact that $C \cup F$ is an $S$-$T$ cut. Moreover,
	the cut value does not increase.
\end{proof}
\noindent
We can now give the analog of~\Cref{lem:complslack-lr}.
\begin{lemma}\label{lem:complslack}
	There exists a collection of paths $\cP$ with the following properties.
	\begin{asparaitem}
		\item Every path $p\in \cP$ begins at a vertex in $S$ and ends at a vertex in $T$.
		\item The paths are pairwise edge-disjoint and any vertex is in at most two paths.
		\item Every path $p\in \cP$ either contains {\em exactly} one vertex in $C$ or {\em exactly} one edge in $F$, but not both.
		\item Every vertex $v\in C$ is in {\em exactly} two paths in $\cP$ and every edge $e\in F$ is in {\em exactly} one path of $\cP$.
	\end{asparaitem}
\end{lemma}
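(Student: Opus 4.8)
The plan is to read off the desired collection $\cP$ from the complementary slackness conditions of the max-flow / min-cut LP, exactly as in the proof of \Cref{lem:complslack-lr}, but keeping careful track of the two capacity types. First I would work with an optimal integral flow $f$ of value $< 2|S|$ in the network with edge-capacities $1$ and vertex-capacities $2$ (vertex capacities are modeled in the usual way by splitting each $v\in V(H)$ into $v^{\mathrm{in}},v^{\mathrm{out}}$ with an internal arc of capacity $2$). Integrality of max-flow in capacitated networks gives such an $f$, and a flow decomposition writes $f$ as a collection $\cP$ of $\le 2|S|$ directed $S$--$T$ paths (there are no cycles to worry about since the hypercube is acyclic), where each edge $e\in E(H)$ carries at most one path and each split-arc $(v^{\mathrm{in}},v^{\mathrm{out}})$ carries at most two. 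This immediately yields the first two bullets: every $p\in\cP$ runs from $S$ to $T$, the paths are pairwise edge-disjoint, and any vertex lies on at most two of them.

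Next I would invoke complementary slackness between $f$ and the optimal min-cut $(C,F)$ guaranteed by \Cref{lem:minim}. For the saturated-cut direction: since $f$ routes exactly $\mathrm{val}(f)$ units and $(C,F)$ is a minimum cut with $2|C|+|F|=\mathrm{val}(f)$, every edge of $F$ must be saturated by $f$ (carries its one unit, hence exactly one path of $\cP$) and every split-arc of a vertex $v\in C$ must be saturated (carries two units, hence exactly two paths of $\cP$); this gives the "$\ge$" half of the last bullet. For the no-slack-on-the-flow direction: because $(C,F)$ is a cut, every path in the decomposition must cross it, i.e. must use either a vertex of $C$ or an edge of $F$; and a counting argument — the total number of $(\text{path},\text{crossing})$ incidences is at most $2|C|+|F|=\mathrm{val}(f)=|\cP|$ after accounting for the fact (from \Cref{lem:minim}(ii)) that no vertex of $C$ meets an edge of $F$ — forces each path to cross the cut exactly once, and in exactly one of the two ways. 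This is precisely the third bullet, and combined with the saturation counts above it upgrades the last bullet to equalities: each $v\in C$ is on exactly two paths, each $e\in F$ on exactly one.

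The main obstacle, and the place where this differs from \Cref{lem:complslack-lr}, is the bookkeeping needed to make the "each path crosses the cut exactly once" counting tight when there are two cut types. One has to rule out a path that, say, passes through two distinct vertices of $C$, or through a vertex of $C$ and an edge of $F$, or through two edges of $F$; each such "double crossing" would make $|\cP| < 2|C|+|F|$ impossible to meet with equality, but phrasing this cleanly requires noting that the contribution of $v\in C$ to the cut "budget" is $2$ precisely because its split-arc is saturated by two path-units, and the contribution of $e\in F$ is $1$ for the one path-unit on it, so the budget $2|C|+|F|$ is exactly the number of (path, crossing-object) pairs — and since each of the $|\cP|=\mathrm{val}(f)=2|C|+|F|$ paths contributes at least one such pair, each contributes exactly one. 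Lemma \ref{lem:minim}(ii) is exactly what prevents a single crossing from being double-counted (a path through $e=(u,v)\in F$ with $v\in C$ would be counted both in $|F|$ and in $2|C|$), so it is essential here. Everything else is routine once this accounting is set up.
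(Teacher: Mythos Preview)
Your proposal is correct and follows essentially the same complementary-slackness / double-counting argument as the paper: decompose an integral maxflow into paths $\cP$, note $|\cP| = 2|C|+|F|$ by duality, and use the chain $|\cP| \le \sum_p c_p = \sum_{e\in C\cup F} k_e \le 2|C|+|F| = |\cP|$ to force all inequalities to equalities.

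One correction: your appeal to \Cref{lem:minim}(ii) is both misquoted and unnecessary. That lemma says no vertex lies on more than one edge of $F$; it does \emph{not} say vertices of $C$ avoid edges of $F$. More importantly, no such fact is needed. The identity $\sum_p c_p = \sum_{e\in C\cup F} k_e$ is pure double counting of (path, cut-element) incidence pairs, and holds regardless of whether some $e\in F$ is incident to some $v\in C$: if a path went through both, it would simply have $c_p \ge 2$, which the squeeze argument then rules out a posteriori. So you can drop that caveat entirely and the proof goes through cleanly, exactly as in the paper.
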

\begin{proof} 
	By complementary slackness (Theeorem A.7 of~\cite{Bills-book}) , every maximum flow must saturate the min cut. Together with integrality of flow, 
	this implies the existence of an integral flow saturating $C \cup F$.
	We give a (simple) formal explanation. By integrality of flow, there is a maximum flow that can be decomposed into paths. Let $\cP$ be those paths.
	Since these paths form a feasible flow, they satisfy the first two bullet points of the lemma. By duality, $|\cP| = 2|C| + |F|$.
	For each path $p \in P$, let $c_p$ be the number of cut elements in $C \cup F$ that the path contains. For each cut element
	$e \in C \cup F$, let $k_e$ be the number of paths that $e$ participates in. So $\sum_{p \in \cP} c_p = \sum_{e \in C \cup F} k_e$.
	Note that $\forall p, c_p \geq 1$, since $C \cup F$ is a valid cut. Thus, $\sum_{p \in \cP} c_p \geq |\cP|$.
	Now, observe that $\forall e \in C, k_e \leq 2$ and $\forall e \in F$, $k_e \leq 1$, since $C \cup F$ must satisfy the flow
	constraints. Hence, $\sum_{e \in C \cup F} k_e \leq 2|C| + F$.
	We get 
	$ |\cP| \leq \sum_{p \in \cP} c_p = \sum_{e \in C \cup F} k_e = 2|C| + |F| = |\cP|$.
	Thus, the inequalities above are all equalities. So $\forall p, c_p = 1$ (third bullet)
	and $\forall e \in C, k_e = 2$ and $\forall e \in F$, $k_e = 1$ (fourth bullet).
\end{proof}
\noindent
Next we provide the relevant generalization of gateway vertices earlier defined in~\Cref{def:gateway-lr}
\begin{definition}\label{def:gateway}
	A vertex $v$ is a {\em gateway} if (i) $v \in \cS$, (ii) $v$ lies on at most one path in $\cP$,
	and (iii) there is at least one edge $(v,w) \notin F$ leaving $v$ in $G_{\cS, \cT}$. 
\end{definition}
\noindent
As before, the proof of~\Cref{thm:dist2} follows from the following lemma, and the remainder of this section will prove it.
\begin{lemma}\label{lem:main}
	For all $i\leq k\leq j-1$, $L_k$ contains a gateway vertex.
\end{lemma}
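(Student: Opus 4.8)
The plan is to follow the template of the proof of \Cref{lem:main-lr}, now carefully tracking the edge set $F$ and the fact that a vertex may lie on two paths of $\cP$ rather than one. First I would dispatch the base case $k=i$, i.e. that the layer $L_i=S$ contains a gateway. Since every path of $\cP$ meets $S$ only in its start vertex, $\sum_{s\in S}(\text{number of paths of }\cP\text{ through }s)=|\cP|=2|C|+|F|<2|S|$, so some $s^\circ\in S$ lies on at most one path. Because $j-i\ge 2$, the vertices $s^\circ$ and $\phi(s^\circ)$ differ in at least two coordinates, so $s^\circ$ has two distinct out-edges in the cover graph; by \Cref{lem:minim}(ii) at most one edge of $F$ is incident to $s^\circ$, hence one of those out-edges avoids $F$. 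Thus $s^\circ$ satisfies all three conditions of \Cref{def:gateway}.

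Next, let $k\in[i,j-1]$ be the largest index such that $L_k$ has a gateway; this is well-defined by the base case. If $k=j-1$ we conclude exactly as in the proof of \Cref{thm:lr}: a gateway $v\in L_{j-1}$ has an out-edge $(v,w)\notin F$ in the cover graph, and $w$ lies in layer $L_j$ of the cover graph, so $w\in T\subseteq\cT$; but $v\in\cS$, $v,w\notin C$, and $(v,w)\notin F$, so splicing a cut-free $S$-to-$v$ path, the edge $(v,w)$, and a cut-free $w$-to-$T$ path produces a cut-free $S$-$T$ path, contradicting that $(C,F)$ is a cut. So assume $k<j-1$, and fix a gateway $v^\star\in L_k\cap\cS$ with out-edge $(v^\star,w)\notin F$ along coordinate $r$; let $\Pi_r$ be the flip of coordinate $r$.

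The heart of the argument is a fixed-point/walk construction paralleling \Cref{lem:main-lr}. I would set
\[
A=\{a\in\cS\cap L_k:\ a_r=0,\ \Pi_r(a)\in G_{\cS,\cT},\ (a,\Pi_r(a))\notin F\},\qquad X=\Pi_r(A),
\]
let $B$ collect the forward neighbors of $X$ along paths of $\cP$, and put $Y=\Pi_r(B)$. As in \Cref{lem:main-lr} one checks that $v^\star\in A$, that $X,Y\subseteq L_{k+1}$ and $B\subseteq L_{k+2}$, and that $X,B,Y$ all lie in the cover graph (the $r$-projection of a cover-graph edge is again one; the $(a,\Pi_r(a))\notin F$ clause is what rules out $X\cap\cT\ne\emptyset$). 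Since $k+1\le j-1$, maximality of $k$ says $L_{k+1}$ has no gateway, which together with \Cref{lem:complslack} forces every vertex of $X$ and of $Y$ onto at least one path of $\cP$ and lets us walk forward from $X$ and backward from $Y$ along such paths. The target is a bijective/counting argument forcing \emph{every} vertex of $A$ onto two paths of $\cP$, contradicting that $v^\star\in A$ lies on at most one.

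I expect this last count to be the main obstacle. In \Cref{lem:main-lr} the paths are vertex-disjoint, every relevant vertex lies on exactly one path, and the chain $|X|=|A|$, $|B|=|X|$, $|Y|=|B|$ is immediate. Here a vertex may sit on two paths, so ``the forward neighbor of $x$'' is ambiguous and $|B|$ need not equal $|X|$; worse, a vertex $x\in L_{k+1}$ lying on only one path can fail to be a gateway merely because its \emph{unique} cover-graph out-edge lies in $F$, so these $F$-edges must be woven into the bookkeeping. The fix I would pursue is to run the walk at the granularity of \emph{path-slots} — two per vertex, one per edge, matching the capacities in \Cref{lem:complslack} — so that the LR bijection becomes a perfect matching between the $2|A|$ path-slots incident to $A$ and the path-slots met at $Y$; the delicate points are tracking how $F$-edges consume edge-slots along the walk and verifying that the two slots assigned to a given $A$-vertex belong to two \emph{distinct} paths.
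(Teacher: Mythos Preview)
Your base case and overall scaffolding match the paper exactly, and you have correctly isolated where the difficulty lies: once vertices can carry two paths and single edges of $F$ can block out-neighbors, the clean bijection chain $|A|=|X|=|B|=|Y|$ from \Cref{lem:main-lr} collapses. But the concrete choices you make to repair it diverge from the paper in ways that create real problems, and the ``path-slot'' patch is not yet an argument.

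Two specific issues. First, your choice of the projection coordinate $r$ is weaker than the paper's. The paper does \emph{not} take $r$ from an arbitrary non-$F$ out-edge of $v^\star$; if $v^\star$ lies on exactly one path $P\in\cP$, it takes $r$ to be the coordinate of the edge of $P$ leaving $v^\star$. This guarantees that no $\cP$-edge incident to $v^\star$ inside $G_{A,B}$ preserves coordinate $r$, a normalization that drives the eventual strict inequality. Second, your added clause $(a,\Pi_r(a))\notin F$ in the definition of $A$ breaks the backward step. When you walk back from some $y\in Y$ along $Q\in\cP$ to a predecessor $a'\in L_k\cap\cS$, nothing prevents $(a',\Pi_r(a'))\in F$; under your definition such $a'$ is \emph{not} in $A$, so the intended counting ``every backward landing is in $A$'' fails. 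The paper keeps the original definition of $A$ (allowing $\Pi_r(a)\in\cT$) and handles the case $x=\Pi_r(a)\in\cT$ directly.

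The missing idea is a different bookkeeping device. Rather than attempting a bijection at the level of path-slots, the paper counts \emph{pink} edges: edges of $G_{A,B}$ that lie on some path of $\cP$ and do \emph{not} flip coordinate $r$. For a set $W$ let $\pink(W)$ be the number of pink edges incident to $W$. By projection, every pink edge at $A$ is a pink edge at $Y$ and every pink edge at $X$ is a pink edge at $B$, giving $\pink(A)=\pink(Y)$ and $\pink(X)=\pink(B)$. Two local claims then finish the proof: (i) $\pink(\Pi_r(a))\ge\pink(a)$ for every $a\in A$, with strict inequality at $v^\star$ (this is where the careful choice of $r$ is used, since it forces $\pink(v^\star)=0$ while any $x\in X$ has $\pink(x)\ge1$), yielding $\pink(X)>\pink(A)$; and (ii) $\pink(\Pi_r(b))\ge\pink(b)$ for every $b\in B$, yielding $\pink(Y)\ge\pink(B)$. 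Chaining gives $\pink(A)=\pink(Y)\ge\pink(B)=\pink(X)>\pink(A)$, the desired contradiction. The case analyses for (i) and (ii) are exactly where the $F$-edges and the $x\in\cT$ possibility are absorbed, via \Cref{lem:complslack} and \Cref{lem:minim}. Your path-slot idea is in the right spirit, but the pink-edge count is the concrete realization that makes the inequalities line up.
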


\noindent
As in the proof of~\Cref{lem:main-lr}, we proceed via minimal counterexamples. 
First, we establish that there is a vertex in $L_i$ that is a gateway vertex. There are $< 2|S|$ paths in $\cP$.
Since any vertex participates in at most $2$ paths, some vertex in $s$ participates in at most $1$ path.
Since $\phi(s)$ is at least distance $2$ away from $s$, $s$ has degree at least $2$ in $G_{S,T}$. 
(This is where the distance between $S$ and $T$ is used.) At most one of those edges is in $F$ (\Lem{minim}),
so there is some edge leaving $s$ that is not in $F$. Thus, $s$ is a gateway vertex.  \medskip

Now, let $k \in [i,j-1]$ be the largest
value such that $L_k$ contains a gateway vertex. If $k = j-1$, we are done. So suppose that $k < j-1$, and  we will engineer a contradiction.
Let $v^\star$ be a gateway vertex in $L_k$. So $v^\star \in \cS$, participates in at most one path of $\cP$,
and there is some edge leaving $v^\star$ that is not in $F$. First, let us choose the projection dimension $r$ as follows.
If $v^\star$  lies on exactly one path  $P \in \cP$, then let $(v^\star, w)$ be the edge of $P$ incident to $v^\star$
and let $r$ be the coordinate that this edge flips. If $v^\star$ lies on no path $P\in \cP$, then
since $k < j-1$, there are at least two\footnote{there is some $s\in S$ such that $s \prec v \prec \phi(s)$ and there are at least $2$ edge disjoint paths from $v$ to $\phi(s)$} edges incident to $v^\star$ that are in the cover-graph; let $(v^\star, w)$ be any such edge and let $r$ be the coordinate of this edge.
As in the proof of~\Cref{lem:main-lr}, we let $\Pi_r$ denote the projection operator, and we define the sets exactly as in the previous proof.

\begin{asparaitem}
	\item $A = \{a ~:~ a \in \cS \cap L_k, a_r = 0 ~\text{and}~ \Pi_r(a)\in G_{\cS,\cT}\}$. Note that $v^\star \in A$.
	\item $X = \Pi_r(A) = \{\Pi_r(a)~:~a\in A\}$. 
	\item $B = \{b ~:~ (x,b) \in P,~\text{for some} ~x \in X, P\in \cP\}$. 
	\item $Y = \Pi_r(B) = \{\Pi_r(b)~:~b\in B\}$. 
\end{asparaitem}

\begin{figure}\label{fig:covgraph}
	\centering
	\includegraphics[trim = 0 60 0 30, clip, scale=0.3]{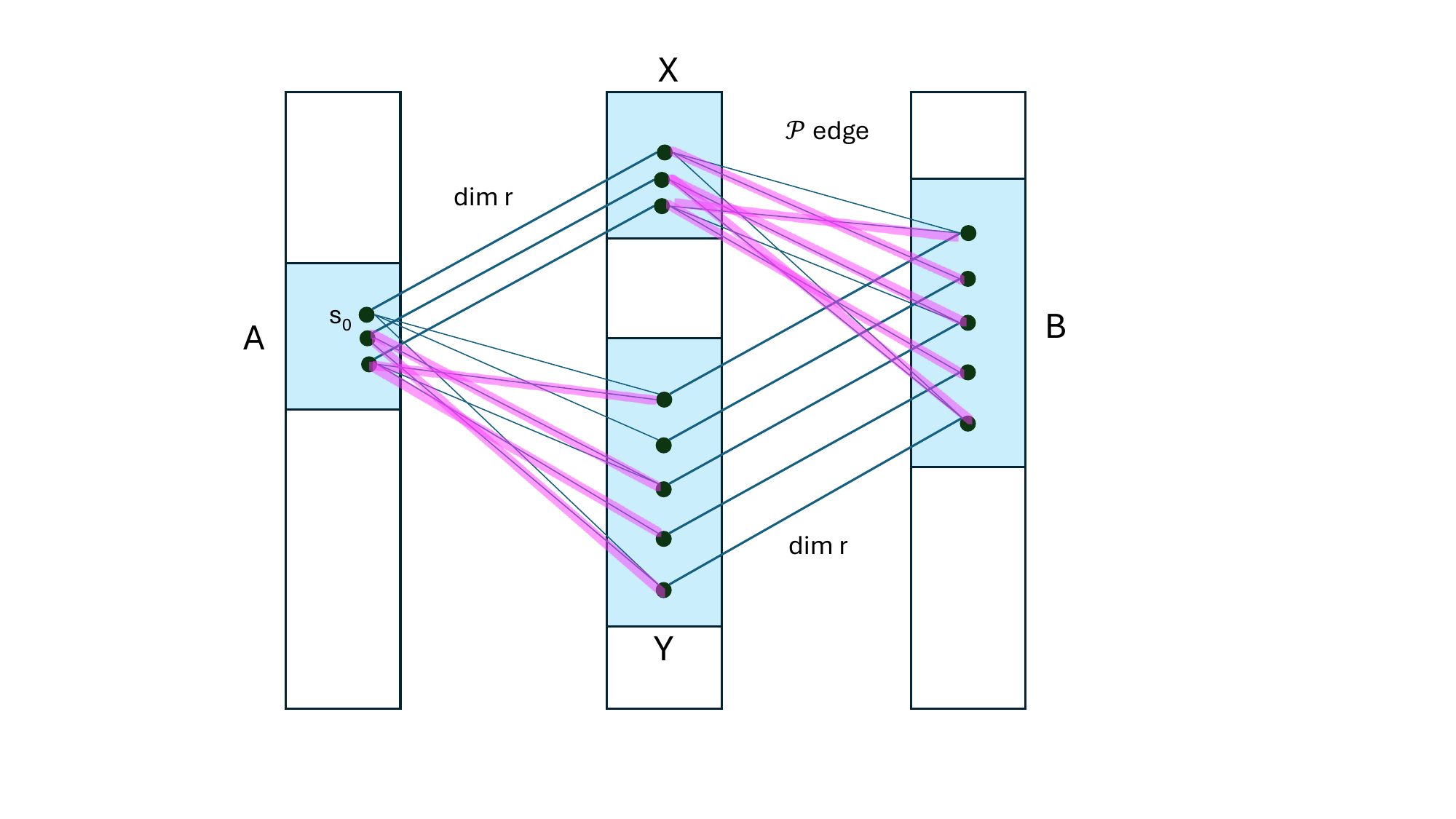}
	\caption{The setup of the proof of \Lem{main}. The set $X$ is a projection of $A$
		along dimension $r$, and similarly $Y$ is a (downward) projection of $B$.
		The edges between $A$ and $Y$ are projections of edges between $X$ and $B$.}
\end{figure}

\noindent
As in the proof of~\Cref{lem:main-lr}, we have $X \cup Y \subseteq L_{k+1}$, and $B \subseteq L_{k+2}$ and that $A, X, B, Y$ 
all lie in the cover graph $G_{\cS, \cT}$. As there, we crucially use the property that the projection of every $(A,Y)$ edge lies is an $(X,B)$ edge.
Note that, we have $|A| = |X|$ and $|B| = |Y|$. However, unlike in~\Cref{lem:main-lr}, there can be {\em two} paths of $\cP$ that may contain $x\in X$, and thus, $|B|$ may be larger in size than $|X|$.
And this makes the proof slightly  more complicated. 

We now restrict our attention to these four sets. As a result, we only consider the 
cover graph $G_{A,B}$, which, recall, is the union of all paths from vertices in $A$ to vertices in $B$ (irrespective of whether they contain cut-elements).
Both $A$ and $B$ lie in this graph by definition. We assert $X\subseteq G_{A,B}$. Take a vertex $x\in X$. If $x \in P$ for some path $P\in \cP$ then it lies in $G_{A,B}$ by construction.
We now show if $x$ is not any path in $\cP$, we already have a contradiction. Indeed, the edge $(a,x)$ where $a = \Pi_r(x)$ isn't in $F$ for otherwise, by~\Cref{lem:complslack} there would be a path containing $(a,x)$. Similarly $x\notin C$. So, $x\in \mathcal{S}$. Since $x\in L_{k+1}$ and $k < j-1$, there is at least one edge leaving it, and as before, this edge isn't in $F$. So, $x$ is a gateway vertex which contradicts the choice of $k$. Thus, the subset $X$ lies in $G_{A,B}$. And since every $(A,Y)$ edge is an  $r$-projection of an $(X,B)$ edge, we argue that $Y$ also lies in $G_{A,B}$.

\noindent
We now make another definition which will lead to our contradiction.

\begin{definition} \label{def:pink} An edge in $G_{A,B}$ is called \emph{pink}
	if it lies in $\cP$ \emph{and} does not change the $r$ coordinate. 
	For any subset of vertices $W$ in $G_{A,B}$, $\pink(W)$ is the number
	of pink edges $W$ is incident to. For singleton subsets $\{v\}$ we abuse notation
	and call $\pink(\{v\})$ simply $\pink(v)$.	
\end{definition}
\noindent
Thus, every pink edge incident to $A$ is incident to $Y$,
and every pink edge incident to $X$ is incident to $B$. 
The following claim is a consequence.

\begin{claim} \label{clm:equal} $\pink(A) = \pink(Y)$ and $\pink(X) = \pink(B)$.
\end{claim}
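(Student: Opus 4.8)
The plan is to reduce the claim to a precise description of the three layers of the restricted cover graph $G_{A,B}$, and in particular of its middle layer. Since $A\subseteq L_k$ and $B\subseteq L_{k+2}$, every path from $A$ to $B$ consists of exactly two edges $a'\to m\to b'$ with $a'\in A$, $m\in L_{k+1}$, $b'\in B$; hence $L_k\cap G_{A,B}\subseteq A$, $L_{k+2}\cap G_{A,B}\subseteq B$, and every vertex of $M:=L_{k+1}\cap G_{A,B}$ sits on such a two-edge path. The heart of the argument is to show $M=X\sqcup Y$, where $X$ is the set of vertices of $M$ whose $r$-th coordinate is $1$ and $Y$ the set whose $r$-th coordinate is $0$.

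First I would do the coordinate bookkeeping: $A$ has $r$-th coordinate $0$, so $X=\Pi_r(A)$ has $r$-th coordinate $1$; $B$ consists of $\cP$-successors of $X$-vertices and therefore also has $r$-th coordinate $1$, so $Y=\Pi_r(B)$ has $r$-th coordinate $0$; in particular $X$ and $Y$ are disjoint. The inclusions $X,Y\subseteq M$ are already established in the text. For the reverse inclusion, take $m\in M$ lying on a two-edge path $a'\to m\to b'$ with $a'\in A$, $b'\in B$: if $m_r=1$ then $a'_r=0\neq m_r$, which forces the single flipped coordinate of the edge $a'\to m$ to be $r$, so $m=\Pi_r(a')\in X$; if $m_r=0$, the symmetric argument applied to the edge $m\to b'$ (using $b'_r=1$) gives $m=\Pi_r(b')\in Y$. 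This establishes $M=X\sqcup Y$, split by the $r$-th coordinate.

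Given this, the claim falls out of the definition of pink. A pink edge lies in $G_{A,B}$ and preserves the $r$-th coordinate, so its two endpoints share an $r$-th coordinate and lie in consecutive layers; hence it is of exactly one of two types — an edge from $L_k$ (so from $A$, $r$-th coordinate $0$) into the $r$-th-coordinate-$0$ part of $M$, namely $Y$; or an edge from the $r$-th-coordinate-$1$ part of $M$, namely $X$, into $L_{k+2}$ (so into $B$). Any pink edge touching $A$ must be of the first type, since nothing lies below $L_k$ in $G_{A,B}$, and it then also touches $Y$; conversely any pink edge touching $Y$ cannot be of the second type, since those edges start in $X$ which is disjoint from $Y$, so it is of the first type and touches $A$. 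Therefore the pink edges incident to $A$ and the pink edges incident to $Y$ are literally the same set of edges, giving $\pink(A)=\pink(Y)$; the mirror-image argument with the second type (a pink edge touching $B$ is of the second type hence touches $X$, and a pink edge touching $X$ cannot be of the first type hence touches $B$) yields $\pink(X)=\pink(B)$.

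The only genuine obstacle here is pinning down $M=X\sqcup Y$, i.e.\ certifying that no stray vertices hide in the middle layer of $G_{A,B}$; this is also exactly what justifies the two assertions made just before the claim. Once that structural fact is in place — and it follows cleanly from the two-edge-path observation together with the $r$-th-coordinate constraints on $A$ and $B$ — the remainder is just bookkeeping over which layer and which $r$-th coordinate each endpoint of a pink edge can take.
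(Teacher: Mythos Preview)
Your proof is correct and follows essentially the same reasoning as the paper, just spelled out in much greater detail. The paper treats this claim as an immediate consequence of the observation ``every pink edge incident to $A$ is incident to $Y$, and every pink edge incident to $X$ is incident to $B$'' stated just before it; your contribution is to make explicit the underlying structural fact $L_{k+1}\cap G_{A,B}=X\sqcup Y$ (split by the $r$-th coordinate), from which both directions of both equalities drop out cleanly.
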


Next, observe that the gateway $v^\star$ participates in at most one path of $\cP$.
If such a path existed, we chose the dimension $r$ according to the incident
edge of this path. Hence, $\pink(v^\star) = 0$.
This is central to proving the final contradiction.

\begin{claim} \label{clm:ax} $\pink(X) > \pink(A)$.
\end{claim}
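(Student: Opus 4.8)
The plan is to reduce \Cref{clm:ax} to a per-vertex inequality and sum it. Recall $\Pi_r$ is a bijection from $A$ onto $X$. I will show that for every $a\in A$, writing $x:=\Pi_r(a)$, one has $\pink(x)\ge\pink(a)$, and that this inequality is \emph{strict} when $a=v^\star$. Granting this, summing over $a\in A$ gives $\pink(X)=\sum_{x\in X}\pink(x)=\sum_{a\in A}\pink(\Pi_r(a))>\sum_{a\in A}\pink(a)=\pink(A)$, which is exactly the claim.

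\textbf{Two easy ingredients.} The first is a description of $\pink(x)$ for $x\in X$. Since $x_r=1$, no pink edge enters $x$ from $L_k$ (its tail would be a vertex of $L_k$ with $r$-coordinate $1$, and no such vertex lies in $G_{A,B}$), so every pink edge at $x$ leaves $x$; and by the definition of $B$, \emph{every} $\cP$-edge leaving $x$ has its head in $B$ and hence, not flipping coordinate $r$, is pink. Thus $\pink(x)$ equals the number of $\cP$-edges leaving $x$, i.e. the number of paths of $\cP$ through $x$; in particular $\pink(x)\ge1$, since every vertex of $X$ lies on a path of $\cP$ (shown above). Secondly, every pink edge at $a\in A$ leaves $a$ towards $Y$ (as already noted before \Cref{clm:equal}), so $\pink(a)\le2$ because $a$ lies on at most two paths of $\cP$. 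Consequently, if $\pink(a)\le1$ we are done ($\pink(x)\ge1\ge\pink(a)$), and for $a=v^\star$ we are done since $\pink(v^\star)=0<1\le\pink(x)$. The only remaining case is $\pink(a)=2$.

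\textbf{The crux (and the main obstacle).} The hard case is $\pink(a)=2$, where I must prove $\pink(x)=2$; this is the new phenomenon absent from the original Lehman--Ron setting, where no vertex is covered twice. Here $a$ has two distinct $\cP$-edges $(a,y_1),(a,y_2)$ with $y_1\ne y_2$ in $Y$; since these are the only $\cP$-edges leaving $a$, the edge $(a,x)$ is not in $\cP$, hence not in $F$ by \Cref{lem:complslack}, and since $a\in\cS$ while every edge from $\cS$ to $\cT$ lies in $F$, this forces $x\notin\cT$. Projecting the two edges along coordinate $r$ yields two \emph{distinct} edges $(x,b_1),(x,b_2)$ of $G_{A,B}$ with $b_i:=\Pi_r(y_i)\in B$ (using that the $r$-projection of an $(A,Y)$-edge is an $(X,B)$-edge). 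Now split on $x$. If $x\in C$, then by \Cref{lem:complslack} $x$ lies on exactly two paths of $\cP$, so $\pink(x)=2$ by the first ingredient. If $x\in\cS$, suppose for contradiction that $\pink(x)\le1$, i.e. $x$ lies on at most one path; since $x\in L_{k+1}$ and $k<j-1$, the maximality of $k$ says $L_{k+1}$ has no gateway, so by \Cref{def:gateway} every edge of $G_{\cS,\cT}$ leaving $x$ must lie in $F$ — in particular $(x,b_1),(x,b_2)\in F$; but by \Cref{lem:complslack} each of these lies on exactly one path of $\cP$, and these two paths are distinct (a single path cannot use two edges out of $x$), so $x$ lies on at least two paths, a contradiction. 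Hence $\pink(x)=2=\pink(a)$ in all cases. The only points needing care are routine bookkeeping: that the two projected edges are genuinely distinct and lie in $G_{A,B}$, that $(a,x)\notin F$ really gives $x\notin\cT$, and that $k+1$ lies in $[i,j-1]$ so that "no gateway in $L_{k+1}$" is available.
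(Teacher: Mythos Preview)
Your proof is correct and follows essentially the same approach as the paper: both reduce to the per-vertex inequality $\pink(\Pi_r(a))\ge\pink(a)$ with strictness at $v^\star$, and both use the projection of the two pink $(a,y_i)$ edges to two $(x,b_i)$ edges to force $\pink(x)=2$ in the hard case. Your organization differs slightly---you first record $\pink(x)\ge 1$ (via the already-proven fact that every $x\in X$ lies on some $\cP$-path) to dispose of $\pink(a)\le 1$ in one line, and you rule out $x\in\cT$ in the hard case by noting $(a,x)\notin F$; the paper instead does a three-way case split on $x\in\cS/C/\cT$ and derives $\pink(a)\le 1$ in the $\cT$ case---but these are cosmetic rearrangements of the same argument.
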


\begin{proof} There is a perfect matching between $A$ and $X$. 
	We prove that $\forall a \in A$, $\pink(\Pi_r(a)) \geq \pink(a)$.
	Furthermore, for the gateway $v^\star$, we get a strict inequality $\pink(\Pi_r(v^\star)) > \pink(v^\star)$.
	It is convenient to do a case analysis based on whether $x = \Pi_r(a)$ is in $\cS, C$, or $\cT$.
	Note that $\pink(a) \leq 2$ by \Lem{complslack} since any vertex participates in at most $2$ paths of $\cP$. Note that all edges of $\cP$ that leave 
	$X$ lead to $B$ (by construction), so all these edges are in $G_{A,B}$.
	
	{\em $x \in C$:} This is the easiest. By \Lem{complslack}, there are two paths through $x$. So $\pink(x) = 2 \geq \pink(a)$.
	
	{\em $x \in \cT$:} Note that $a \in \cS$, and so in this case, the $r$-projection edge $(a,x)$ must be $\in F$. By \Lem{complslack}, there is 
	a path of $\cP$ through $(a,x)$, and the next edge leaving $x$ goes into $B$. Since the $r$-coordinate doesn't change, this edge is pink, and so $\pink(x) \geq 1$.
	Since $a$ participates in the path through projection edge $(a,x)$, which is {\em not} pink, it can participate in at most one other path of $\cP$
	(by \Lem{complslack}, no vertex is in more than $2$ paths). Thus, in this case $\pink(a) \leq 1$ implying $\pink(a) \leq \pink(x)$.
	
	{\em $x \in \cS$:} By our assumption, $x$ cannot be a gateway vertex.
	So either all edges leaving $x$ are in $F$ or $x$ is in $2$ paths of $\cP$. If the latter happens,
	then $\pink(x) = 2$, completing this case. 	
	So let us assume the former. There must be some
	edge leaving $x$ in $G_{A,B}$ because $x$ is in the cover graph $G_{A,B}$; since $x$ is not a gateway vertex, this edge which must be in $F$.
	By~\Cref{lem:complslack}, there is a path $P\in \cP$ containing this and the edge incident to $x$ doesn't change the $r$-coordinate, and thus is pink.
	So, $\pink(x) \geq 1$.  So, if $\pink(a) \leq 1$ we are done. So, suppose 
	$\pink(a) = 2$. That is, there are two distinct edges $(a,y)$ and $(a,y')$ which are pink. 
	Consider the $r$-projection of these edges, $(x, \Pi_r(y))$ and 
	$(x, \Pi_r(y'))$; these are present in $G_{A,B}$ and so by our assumption above, these two must be in $F$.
	But again by~\Cref{lem:complslack}, there are two paths in $\cP$ containing these, and so these edges are pink, implying $\pink(x) = 2$ as well.
	This settles this case.

	All in all, we have proven that $\pink(x) \geq \pink(a)$, and note that in all cases, $\pink(x) \geq 1$.
	Since $\pink(v^\star) = 0$, we get the strict inequality $\pink(\Pi_r(v^\star)) > \pink(v^\star)$, completing the proof of the claim.
\end{proof}
\noindent
The next claim which, along with \Clm{equal}, contradicts \Clm{ax}, completing our proof of~\Cref{lem:main}.

\begin{claim} \label{clm:by} $\pink(B) \leq \pink(Y)$.
\end{claim}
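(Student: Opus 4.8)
The plan is to prove $\pink(B) \le \pink(Y)$ by a vertex-by-vertex comparison along the bijection $\Pi_r \colon B \to Y$, mirroring the proof of \Clm{ax} but on the top side of the configuration. Since $B \subseteq L_{k+2}$ and $Y \subseteq L_{k+1}$, no pink edge has both ends in $B$ (nor both in $Y$), so $\pink(B) = \sum_{b \in B} \pink(b)$ and $\pink(Y) = \sum_{y \in Y} \pink(y)$, and it suffices to show $\pink(b) \le \pink(\Pi_r(b))$ for each $b \in B$. Two structural facts drive this. First, the set of vertices of $G_{A,B}$ in $L_{k+1}$ having $r$-coordinate $1$ is exactly $X$: any such vertex $w$ has an $A$-vertex below it, and since $A$-vertices have $r$-coordinate $0$ that vertex can only be $\Pi_r(w)$. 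Consequently every pink edge incident to $b$ has the form $(v,b)$ with $v \in X$ (hence $\Pi_r(v) \in A$), and there are at most two of them by \Lem{complslack}. Second, for $y \in Y$ and any path $Q \in \cP$ through $y$ whose prefix up to $y$ is cut-free --- which holds whenever $y \in \cS$ or $y \in C$ --- the edge of $Q$ entering $y$ (which does not flip $r$, since $y_r = 0$) has its tail $u$ in $A$: indeed $u \in \cS \cap L_k$ and $u_r = 0$, and $\Pi_r(u) \in G_{\cS,\cT}$ because $\Pi_r(u)$ lies between some $s \in S$ below $u$ and the vertex $b = \Pi_r(y) \in B \subseteq G_{\cS,\cT}$ above it. Thus when $y \in \cS \cup C$ each path of $\cP$ through $y$ contributes a distinct pink edge at $y$, so $\pink(y)$ equals the number of paths of $\cP$ through $y$.

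Given these, I would do a three-way case split on the position of $y := \Pi_r(b)$ relative to the mincut $(C,F)$. If $y \in \cT$: the $r$-projection of any pink edge $(v,b)$ at $b$ is the edge $(\Pi_r(v),y)$, which runs from $\Pi_r(v) \in A \subseteq \cS$ into $\cT$, hence lies in $F$ and so on a path of $\cP$ by \Lem{complslack}; it does not flip $r$ and lies in $G_{A,B}$ (as $(\Pi_r(v),y,b)$ is an $A$-to-$B$ path), so it is a pink edge at $y$, and since $\Pi_r$ is injective on vertices this gives an injection from pink edges at $b$ to pink edges at $y$, whence $\pink(b) \le \pink(y)$. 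If $y \in C$: \Lem{complslack} gives exactly two paths of $\cP$ through $y$, so $\pink(y) = 2 \ge \pink(b)$ by the second structural fact. If $y \in \cS$: since $y \in L_{k+1}$ and $k$ is maximal, $y$ is not a gateway, so either $y$ lies on two paths of $\cP$, giving $\pink(y) = 2 \ge \pink(b)$ as before, or $y$ lies on at most one path and every edge leaving $y$ in $G_{\cS,\cT}$ lies in $F$; then $(y,b) \in F$, so $y$ lies on exactly one path $Q$ (the one through $(y,b)$) and $\pink(y) = 1$, while $b \in \cT$ and $Q$ enters $b$ along $(y,b)$ with $y \notin X$, so at most one of the (at most two) paths through $b$ enters $b$ from $X$, whence $\pink(b) \le 1 = \pink(y)$.

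Summing over $b \in B$ gives $\pink(B) \le \pink(Y)$, and combined with \Clm{equal} ($\pink(X) = \pink(B)$, $\pink(A) = \pink(Y)$) and \Clm{ax} ($\pink(X) > \pink(A)$) this yields $\pink(A) = \pink(Y) \ge \pink(B) = \pink(X) > \pink(A)$, a contradiction; hence the largest $k$ with a gateway in $L_k$ cannot be $< j-1$, which establishes \Lem{main} and therefore \Thm{dist2}. I expect the $y \in \cS$ branch to be the delicate point: one has to observe that in the ``all outgoing edges in $F$'' sub-case the unique path through $y$ enters $b$ along the non-pink edge $(y,b)$, which is precisely what caps $\pink(b)$ at $1$; without that observation the naive $r$-projection of a second pink edge at $b$ need not lie in $\cP$, since $(\Pi_r(v),y)$ then runs inside $\cS$ and no path of $\cP$ is forced through it.
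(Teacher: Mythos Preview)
Your proof is correct and follows essentially the same approach as the paper's: a vertex-by-vertex comparison along the bijection $\Pi_r\colon B\to Y$, the same ``subtlety'' observation that for $y\in\cS\cup C$ the predecessor on any $\cP$-path lies in $A$, and the same three-way case split on $y\in C,\ y\in\cS,\ y\in\cT$ with identical arguments in each case. Your two preliminary ``structural facts'' just make explicit what the paper uses inline, and your handling of the delicate $y\in\cS$ sub-case (noting that the unique path through $y$ enters $b$ via the non-pink edge $(y,b)$, capping $\pink(b)$ at $1$) matches the paper's reasoning exactly.
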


\begin{proof}
	There is a perfect matching between $B$ and $Y$ by the $\Pi_r$ projection. We will show that
	for every $b \in B$, $\pink(b) \leq \pink(y)$, where $y = \Pi_r(b)$. 
	The proof is analogous to that of \Clm{ax}, with a subtle difference. All edges of $\cP$
	leaving $X$ are in $G_{A,B}$ by construction. But all edges of $\cP$ entering $Y$ might 
	not be in $G_{A,B}$. In particular, there could be a path $Q\in \cP$ which contains $y$
	and the predecessor, call it $z$, of $y$ on this path may not be in $A$. For instance, this could occur 
	if $y \in \cT$ and $z \in C$ (and thus not in $A$). 
	
	With hindsight, we assert that 
	if $y \in \cS \cup C$, then the vertex $z$ indeed lies in $A$. To see this note that $z\in \cS$;
	this follows from \Lem{complslack} since the path $Q$ contains exactly one cut-vertex or cut-edge.
	Furthermore, $\Pi_r(z)$ must lie in $G_{\cS,\cT}$ since $(z,\Pi_r(z), b)$ is present in the hypercube.
	In all, $z\in A$.
	The reader may wonder why this subtlety didn't arise in the original Lehman-Ron proof that we showed in the previous section.
	Well, there is a vertex $a\in A$ such that $(a,y)$ is an edge, and since in the previous section we only had vertex cuts, we could (and did) assert $y\in S\cup C$.
	But now the edge $(a,y)$ could be in $F$. As we will see, the case when $y\in T$ is actually easy to take care of.
	
	Now for the case analysis. Recall that $b \in B$ and $y = \Pi_r(w) \in Y$.
	
	{\em $y \in C$:} By~\Cref{lem:complslack} there are two paths entering $y$, and since $y\in C$ due to the discussion about the subtlety above, $\pink(y) = 2$.
	Note that $\pink(b) \leq 2$ since there can be at most two paths of $\cP$ incident on $b$.
	
	{\em $y \in \cS$:} Since $y$ is not a gateway vertex (overall assumption), either $y$
	participates in two paths of $\cP$ or all edges leaving $y$ are in $F$. 
	In the former case, since $y\in S$ due to the discussion about the subtlety above, $\pink(y) = 2$. In the latter case, the edge $(y,b)$ must be in $F$.
	So there is at least one path through $y$ and, again since $y\in S$, $\pink(y) \geq 1$. Now note that the $(y,b)$ edge is {\em not} pink although it is on a path in $\cP$ 
	because the $r$-coordinate changes. So, $\pink(b) \leq 1$.
	
	{\em $y \in \cT$:} Observe that all edges $(z,y)$ with $z \in A \subseteq \cS$ must be cut, that is $(z,y) \in F$.
	By \Lem{minim}, all such edges are on paths in $\cP$ and are thus pink (they are clearly in $G_{A,B}$ and don't change $r$-coordinate).
	Suppose $\pink(b) = t$ where $t\in \{1,2\}$. Then there are $t$ distinct pink edges of the form $(x,b)$ with these $x$'s in $X$.
	Consider their $r$-projections, that is, the $t$ edges of the form $(\Pi_r(x), y)$. Since $\Pi_r(x)\in A$, all these edges must be pink.
	This shows that if $y\in \cT$, $\pink(y) \geq \pink(b)$.
\end{proof}

\section{Connections to monotonicity testing} \label{sec:mono}

The motivation for \Conj{glr} (and indeed, \Thm{lr}) is a deeper understanding of the problem of monotonicity testing of functions, a problem which,
especially over the hypercube and hypergrid domains, has had a rich history of more than 25 years~~\cite{Ras99,EKK+00,GGLRS00,DGLRRS99,LR01,FLNRRS02,HK03,AC04,HK04,ACCL04,E04,SS08,Bha08,BCG+10,FR,BBM11,RRSW11,BGJ+12,ChSe13,ChSe13-j,ChenST14,BeRaYa14,BlRY14,ChenDST15,ChDi+15,KMS15,BeBl16,Chen17,BlackCS18,BlackCS20,BKR20,HY22,BrKh+23,BlChSe23, BlChSe23-2}. A function $f:\{0,1\}^d \to \{0,1\}$ is monotone if $\forall x,y \in \{0,1\}^d$ where $x \prec y$, $f(x) \leq f(y)$.
The distance between two functions $f,g$ is $|\{x: f(x) \neq g(x)\}|/2^d$, and the distance of $f$ to monotonicity, denote $\eps_f$, is the minimum distance of $f$ to a monotone $g$.
A function $f$ is said to be $\eps$-far from monotone if $\eps_f \geq \eps$.
The aim of a tester is to distinguish a monotone function from one that is "far" from monotone. 
There is a special focus on non-adaptive monotonicity testers with one-sided error. These are testers that (i) always accept monotone functions, and (ii) make all their queries in advance.
After a long line of work, this has been resolved (up to $\log d, \poly(\eps^{-1})$ factors) to
be $\Theta(\sqrt{d})$~\cite{GGLRS00,FLNRRS02,ChSe13-j,ChenST14,KMS15,ChenDST15,Chen17}.

The study of these monotonicity testers led to the discovery of \emph{directed isoperimetric inequalities}.
Much of the study in this paper came from attempts at an alternate, more combinatorial proofs of a 
central isoperimetric inequality, the so-called {\em robust directed Talagrand theorem} due to Khot, Minzer, and Safra~\cite{KMS15}.
In this section, we give connections between monotonicity testing, directed isoperimetric inequalities
(like the KMS theorem), and routing on the directed hypercube. Most importantly, we describe another 
routing statement, \Conj{rout}, which implies the KMS theorem. We believe that \Conj{rout} and \Conj{glr}
are closely related, as explained in \Sec{conj}.

\begin{definition} \label{def:viol} A pair of vertices of $\{0,1\}^d$ $x \prec y$
is called a \emph{violation} of $f(x) > f(y)$. This pair is called a \emph{violated edge}
if additionally, $(x,y)$ is an edge of the hypercube.
For any $x$, the directed influence $\dirinf{f}{x}$ is the number of violated edges incident to $x$.
The directed influence of $f$, denoted $I^+_f$, is $2^{-d} \sum_x \dirinf{f}{x}$.
\end{definition}
\noindent
The most basic inequality is the directed Poincare inequality, which directly
leads to $O(d)$ query monotonicity testers (for constant $\eps$).

\begin{theorem} \label{thm:dirpoin} \cite{GGLRS00} For any $f:\{0,1\}^d\to \{0,1\}$,  $I^+_f \geq \eps_f$.
\end{theorem}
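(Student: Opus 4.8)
The plan is to prove the directed Poincar\'e inequality $I^+_f \ge \eps_f$ by a simple \emph{edge-deletion / local-correction} argument that exhibits a monotone function $g$ with $\mathrm{dist}(f,g) \le I^+_f$. Concretely, I would show that if $f$ has $m$ violated edges (so $I^+_f = m/2^d$), then one can change $f$ on at most $m$ points to obtain a monotone function. This immediately gives $\eps_f \le m/2^d = I^+_f$.

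First I would make precise the correction step. Consider the set $E^-$ of violated edges. The idea is to ``fix'' violated edges one at a time (or dimension by dimension) without creating new ones. The cleanest route is the dimension-by-dimension sorting operator: for each coordinate $i$, define the operator $A_i$ that, for every edge $(x,y)$ in direction $i$ with $f(x) > f(y)$, swaps the values, i.e. sets the new function to have the smaller value at $x$ and the larger at $y$ (this is just sorting $f$ along each line in direction $i$). A classical fact --- which I would state and prove via a short case analysis on a $2$-dimensional face --- is that applying $A_i$ never increases the number of violated edges in \emph{any} direction $j$, and strictly decreases the count in direction $i$ to zero along lines already processed; moreover $A_i$ changes $f$ at exactly $2\cdot(\text{number of violated edges in direction }i)$ points, which is at most twice that count, but since each such change corresponds to an endpoint of a violated $i$-edge, the total number of points where the value changes, summed appropriately, is at most the number of violated $i$-edges \emph{before} the step. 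Iterating $A_1, A_2, \dots, A_d$ yields a monotone function (no violated edges remain in any direction), and the total number of points changed is at most $\sum_i |E^-_i| = m$, where $E^-_i$ is the set of violated edges in direction $i$ at the time that direction is processed, and this is bounded by the original $m$ because earlier steps do not increase violation counts.

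The step I expect to be the main obstacle --- really the only non-routine point --- is the claim that the sorting operator $A_i$ does not increase the number of violated edges in any other direction $j$, and the careful accounting that the number of \emph{points changed} is bounded by the number of violated edges rather than twice that number. For the monotonicity-preservation claim, I would reduce to a single $2$-dimensional face spanned by directions $i$ and $j$: here $f$ restricted to the four vertices is a $\{0,1\}$-labeling of a square, and one checks by enumerating the (finitely many) cases that sorting the two $i$-edges of the square cannot turn a non-violated $j$-edge into a violated one. For the counting, the key observation is that along a single line in direction $i$, if there are $k$ violated edges then sorting that line changes the value at some number of points that is at most $k$ (a $0/1$ sequence with $k$ descents, when sorted, changes in at most $k$ positions --- indeed exactly the ``out of order'' positions, whose count equals the number of descents). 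Summing over all lines in direction $i$ gives that step $i$ changes at most $|E^-_i|$ points. I would also remark that a union-bound-style argument (delete one endpoint of each violated edge and argue a monotone completion exists) gives the weaker bound $2 I^+_f \ge \eps_f$ more easily, but the sorting argument gives the stated tight constant.
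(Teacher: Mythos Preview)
The paper does not prove this theorem; it simply cites it from \cite{GGLRS00} as a known result and uses it as a tool. Your approach via the dimension-by-dimension sorting (shift) operators $A_i$ is exactly the classical argument from that reference, so in spirit you are reproducing the intended proof.

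That said, your write-up contains two counting errors that happen to cancel. First, by the paper's definition $I^+_f = 2^{-d}\sum_x \dirinf{f}{x}$, and since each violated edge is incident to \emph{two} vertices, one has $\sum_x \dirinf{f}{x} = 2m$ where $m$ is the number of violated edges; hence $I^+_f = 2m/2^d$, not $m/2^d$ as you wrote. Second, your claim that ``a $0/1$ sequence with $k$ descents, when sorted, changes in at most $k$ positions'' is false: on the Boolean hypercube a line in direction $i$ is a single edge (a length-$2$ sequence), and if that edge is violated ($k=1$ descent, pattern $1,0$), sorting it to $0,1$ changes \emph{both} positions. Thus applying $A_i$ changes exactly $2m_i$ points, where $m_i$ is the current number of violated $i$-edges, and by the non-increasing property you correctly identify, the total number of points changed over all $d$ steps is at most $2\sum_i m_i = 2m$. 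This yields $\eps_f \le 2m/2^d = I^+_f$, which is the desired inequality. So the strategy and the final conclusion are correct, but both intermediate factors of $2$ need to be fixed; the ``tighter'' per-step bound you attempt is neither true nor needed.
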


The main step towards $o(d)$ query testers (for constant $\eps$) is a stronger isoperimetric inequality,
the directed Margulis bound. Let $\Gamma^+_f$ denote the size of the largest matching
of violated edges, which is a measure of the vertex boundary.

\begin{theorem} \label{thm:dirmarg} \cite{ChSe13-j} For any $f:\{0,1\}^d\to \{0,1\}$, $I^+_f \cdot \Gamma^+_f = \Omega(\eps^2_f)$.
\end{theorem}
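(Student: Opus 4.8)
The plan is to deduce the bound from a \emph{robust} strengthening of the directed Poincar\'e inequality \Thm{dirpoin}, with Cauchy--Schwarz as the organizing tool. Writing $\Gamma^+_f$ normalized by $2^d$, it suffices to prove $\eps_f\le C\sqrt{I^+_f\cdot\Gamma^+_f}$ for an absolute constant $C$. For this I would first establish: there is a set $W\subseteq\{0,1\}^d$ with $|W|\le C_1\,\Gamma^+_f\,2^d$ such that
\[ \eps_f \;\le\; C_1\cdot \EX_{x}\Big[\sqrt{\dirinf{f}{x}}\cdot\mathbbm{1}[x\in W]\Big].\]
Granting this, Cauchy--Schwarz gives $\eps_f\le C_1\sqrt{\EX_x[\dirinf{f}{x}]}\cdot\sqrt{\Pr_x[x\in W]}= C_1\sqrt{I^+_f}\cdot\sqrt{|W|/2^d}\le C_1^{3/2}\sqrt{I^+_f\,\Gamma^+_f}$, which is the theorem. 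So all the content sits in the displayed inequality.

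To prove that inequality I would use the routing viewpoint of this paper. By K\"onig's theorem applied to the bipartite \emph{violation graph} (sides $f^{-1}(1),f^{-1}(0)$, edges the violated pairs $x\prec y$), its minimum vertex cover has size $\eps_f 2^d$, so there is a matching $M$ of $\eps_f 2^d$ violated pairs. Group $M$ by the ordered pair of Hamming layers of its endpoints. Each class is a matched pair $(S,T;\phi)$ with $S\subseteq L_i$, $T\subseteq L_j$, so \Thm{lr} produces $|S|$ vertex-disjoint monotone paths from $S$ to $T$; each such path runs from a $1$-vertex to a $0$-vertex, hence contains a violated edge $(u,v)$, at whose endpoints $\sqrt{\dirinf{f}{u}}+\sqrt{\dirinf{f}{v}}\ge 2$, and vertex-disjointness keeps the contributions of different paths on disjoint vertex sets. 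Taking $W$ to be the union of the vertices on all chosen paths makes $\EX_x[\sqrt{\dirinf{f}{x}}\mathbbm{1}_{x\in W}]$ at least (roughly) $2\eps_f$; the remaining task is to keep $|W|$, which is the total number of vertices on the paths, down to $O(\Gamma^+_f 2^d)$ --- i.e.\ to make the chosen paths \emph{short on average}. This forces a dichotomy: violated pairs of small span yield paths with few vertices each, whereas for a pair $(a,b)$ of large span one observes that the violated edges inside the interval $[a,b]$ form an $a$--$b$ edge cut, and that the minimum such cut in the subcube $[a,b]\cong\{0,1\}^{\|b-a\|_1}$ has size exactly $\|b-a\|_1$ (its $\|b-a\|_1$ vertices at level $\|a\|_1+1$ each have in-degree one, so Menger gives only that many edge-disjoint $a$--$b$ paths); hence a large-span pair already certifies many violated edges, and a matching of violated edges cannot contain too many disjoint large-span pairs.

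The step I expect to be the real obstacle is making this accounting \emph{lossless in $d$}. A naive execution loses $\poly(d)$ twice: summing the per-layer-pair Lehman--Ron solutions introduces a factor equal to the number of layer pairs, and the ``few disjoint large-span pairs'' estimate needs the multiplicity with which one violated edge is reused across intervals $[a,b]$ to be $O(1)$. Both losses are removed by the standard \emph{regularization/padding} reduction --- embed $f$ into $\{0,1\}^{\Theta(d)}$ by appending free coordinates, which changes $\eps_f,I^+_f,\Gamma^+_f$ only by constants but concentrates the violation structure into an $O(\sqrt d)$-wide band of middle layers, flattening the relevant binomial coefficients. Combined with a Hall/LP-duality argument converting ``many violated edges per large-span interval'' into a genuine $O(1)$ charge per violated edge, and the covering refinement of Lehman--Ron from~\cite{BaGr+24} to avoid a union bound over layer pairs, this is where essentially all of the work lies; without these reductions the argument only yields $I^+_f\cdot\Gamma^+_f=\Omega(\eps_f^2/\poly(d))$.
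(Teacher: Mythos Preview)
Your framework --- find $W$ with $|W|\le C_1\Gamma^+_f 2^d$ and $\sum_{x\in W}\sqrt{\dirinf{f}{x}}\ge \eps_f 2^d/C_1$, then Cauchy--Schwarz --- is sound in principle, but your construction of $W$ cannot satisfy the size bound. If the separation distance of $S=f^{-1}(1)$ is $r$, the Lehman--Ron paths you build have total length $\Theta(r\cdot \eps_f 2^d)$, so $|W|=\Theta(r\,\eps_f 2^d)$. On the other hand, the correct relationship (and this is exactly \Thm{cslr}) is $\Gamma^+_f 2^d=\Theta(\eps_f 2^d/r)$. Your target $|W|\le O(\Gamma^+_f 2^d)$ is therefore off by a factor of $r^2$, not by $\poly(d)$; padding the domain does not change $r$, and the covering version of Lehman--Ron only addresses vertex overlap, not total path length. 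Your large-span dichotomy lower bounds the \emph{number of violated edges} (hence $I^+_f$) when many pairs in $M$ have large span --- it says nothing about $\Gamma^+_f$, which is a \emph{matching} count, so it does not help you shrink $|W|$ relative to $\Gamma^+_f 2^d$.

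The paper's route is structurally different and sidesteps this completely: it introduces the separation distance $r$ as an explicit intermediary and proves two independent flow bounds, $I^+_f=\Omega(r\,\eps_f)$ from the unit-edge-capacity network (\Thm{cspoin}) and $\Gamma^+_f=\Omega(\eps_f/r)$ from the unit-vertex-capacity network via Lehman--Ron plus an uncrossing argument (\Thm{cslr}). Multiplying eliminates $r$ and gives $I^+_f\cdot\Gamma^+_f=\Omega(\eps_f^2)$ directly. The point is that $r$ appears with opposite signs in the two bounds; trying to package both into a single witness set $W$ forces you to control $r$ and $1/r$ simultaneously, which is why your size constraint fails by exactly $r^2$.
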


The culmination of this line of work lead to the \emph{robust, directed Talagrand inequality} for KMS,
which yielded the (near) optimal $\widetilde{O}(\sqrt{d})$-query non-adaptive monotonicity tester. (The original KMS result
lost a log factor, which was removed by Pallavoor-Raskhodnikova-Waingarten~\cite{PRW22}.)

\begin{theorem} \label{thm:dirtal} \cite{KMS15,PRW22} Let $\chi$ be any bicoloring of the directed
hypercube edges, with two colors $0$ and $1$. For any $f:\{0,1\}^d\to \{0,1\}$ and for any $x\in \{0,1\}^d$, let $\dirinfc{f}{\chi}{x}$
be the number of violated edges incident to $x$ whose color is $f(x)$. Then,

\[
2^{-d} \sum_x \sqrt{\dirinfc{f}{\chi}{x}} = \Omega(\eps_f)
\]
\end{theorem}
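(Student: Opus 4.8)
\medskip
\noindent\textbf{Proof proposal.}
The plan is to deduce \Thm{dirtal} from the routing conjecture \Conj{glr} (or, more directly, from the routing statement \Conj{rout}), in the spirit of the routing-based approach to directed isoperimetric inequalities such as \Thm{dirpoin}: we route the violations of $f$ along vertex-disjoint paths, but now carry the bicoloring $\chi$ along and exploit the \emph{multiplicity} of edge-disjoint path collections that \Conj{glr} guarantees.

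First I would reduce to a clean single-scale routing problem (Step~1), importing the regularization and random-restriction machinery of KMS. Fix $f$ with $\eps_f=\eps$ and a coloring $\chi$. A K\"onig/Dilworth argument on the violation graph produces a matched pair $(S,T;\phi)$ with $f|_S\equiv 1$, $f|_T\equiv 0$, $s\prec\phi(s)$, of size $\Omega(\eps 2^d)$; the real work is to make this pair \emph{layered} ($S\subseteq L_i$, $T\subseteq L_{i+r}$) with $r$ at the bulk scale $\Theta(\sqrt d)$, and of near-extremal size for that scale. I would pass to a uniformly random coordinate-subcube of dimension $k=\Theta(\sqrt d)$, inside which $f$ typically retains an $\Omega(\eps)$ fraction of its distance and, after KMS-style regularization, has $\Omega(\eps 2^{k}/\sqrt k)$ violations realized as a matched pair between two fixed layers at distance $r=\Theta(\sqrt k)$.

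The new ingredient is the middle step (Step~2). Inside a good subcube, apply \Conj{glr} to this layered matched pair at distance $r$: we obtain $r$ collections $\cP_1,\dots,\cP_r$ of vertex-disjoint $S$--$T$ paths whose \emph{union is edge-disjoint}, hence $r\cdot|S|$ pairwise edge-disjoint $S$--$T$ paths in all. On any such path $s=v_0\prec v_1\prec\cdots\prec v_r=t$ we have $f(v_0)=1$, $f(v_r)=0$; picking the largest $\ell$ with $f(v_\ell)=1$, the edge $(v_\ell,v_{\ell+1})$ is violated, and whatever its $\chi$-color, being $0$ or $1$, it equals $f(v_\ell)$ or $f(v_{\ell+1})$. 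Thus every route carries a \emph{correctly colored} violated edge that no adversarial $\chi$ can ``color away''; harvest it and charge it to the endpoint whose $f$-value equals its color. Edge-disjointness makes the $r|S|$ harvested edges distinct, and a vertex $v$ is charged only by distinct correctly-colored violated edges incident to it of color $f(v)$; hence the number of charges $c_v$ at $v$ satisfies $c_v\le\dirinfc{f}{\chi}{v}$, while $\sum_v c_v=r|S|$. This is exactly where strengthening \Thm{lr} pays off: the width of a single layer forces $|S|=O(2^{k}/\sqrt k)$, so one LR solution alone harvests too few edges, and it is the factor $r=\Theta(\sqrt k)$ of edge-disjoint collections from \Conj{glr} that compensates.

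Finally, the bound is assembled (Step~3). Inside the $k$-cube $c_v\le\dirinfc{f}{\chi}{v}\le k$, so $\sqrt{\dirinfc{f}{\chi}{v}}\ge c_v/\sqrt k$, whence $2^{-k}\sum_v\sqrt{\dirinfc{f}{\chi}{v}}\ge 2^{-k}r|S|/\sqrt k=\Omega(\eps/\sqrt k)$. Passing back to the full cube loses only a matching factor $\sqrt{d/k}=d^{1/4}$: by concavity of $\sqrt{\cdot}$ one has $\sqrt{\dirinfc{f}{\chi}{x}}\ge\sqrt{d/k}\cdot\mathbb{E}_\rho\big[\sqrt{\dirinfc{f_\rho}{\chi}{x}}\big]$ for a uniformly random restriction $\rho$ to a $k$-subcube, and the two $\poly(d)$ factors cancel to give $2^{-d}\sum_x\sqrt{\dirinfc{f}{\chi}{x}}=\Omega(\eps)$. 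I expect the genuine difficulty to be Steps~1 and~3 rather than Step~2: arranging that the violations can actually be realized as a layered, near-extremal matched pair at the bulk scale (a worst-case $\eps$-far $f$ could a priori have all its matched pairs at the ``wrong'' scale, or spread too thinly across layers, in which case the accounting above breaks), and controlling the double-counting of harvested edges when several layers or scales are combined. In short, one must re-derive or black-box the restriction-and-regularization skeleton of KMS; Step~2 is the clean part, where the combinatorics of \Conj{glr} --- several edge-disjoint LR solutions, each route forced to carry a correctly-colored violated edge --- stands in for the Fourier-analytic core of the original KMS proof.
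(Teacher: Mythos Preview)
The paper does not prove \Thm{dirtal} from scratch; it is quoted from \cite{KMS15,PRW22}. What the paper \emph{does} prove is \Clm{link}: that \Conj{rout} implies \Thm{dirtal}. That argument is very different from yours, and much shorter. Given $f$ and $\chi$, let $S=f^{-1}(1)$ with separation distance $r$, and write $d(x)=\dirinfc{f}{\chi}{x}$. The coloring $\chi$ itself is used to build a cut in the flow network of \Conj{rout}: for each $x$ with $d(x)\le r^2$ delete the $d(x)$ correctly-colored violated edges at $x$; for each $x$ with $d(x)>r^2$ delete the vertex $x$. This cuts every violated edge, hence separates $S$ from $\overline S$, and its cost is $\sum_{d(x)\le r^2}d(x)+r^2\,|\{x:d(x)>r^2\}|$. \Conj{rout} lower-bounds this cost by $\Omega(r\,\dirvol{S})=\Omega(r\,\eps_f 2^d)$, and a two-line case split (using $d(x)\le r^2\Rightarrow d(x)\le r\sqrt{d(x)}$ and $d(x)>r^2\Rightarrow \sqrt{d(x)}>r$) gives $\sum_x\sqrt{d(x)}=\Omega(\eps_f 2^d)$. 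No random restrictions, no layering, no harvesting of path-edges.

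Your route is a genuinely different one, going through \Conj{glr} rather than \Conj{rout}, and the clean middle step you isolate (each of $r|S|$ edge-disjoint $S$--$T$ paths must carry a correctly-colored violated edge, chargeable to one endpoint) is correct and nice. But the surrounding scaffolding is where the argument is incomplete, and you flag this yourself. Step~1 requires producing, inside a random $k$-subcube, a \emph{layered} matched pair at distance $\Theta(\sqrt k)$ of size $\Omega(\eps 2^k/\sqrt k)$; this is essentially the structural output of the KMS regularization, so you are importing the hard part of \cite{KMS15} rather than replacing it. Step~3's back-transfer via $\sqrt{d/k}$ and Jensen is plausible numerology, but the claimed inequality $\sqrt{\dirinfc{f}{\chi}{x}}\ge\sqrt{d/k}\cdot\EX_\rho\sqrt{\dirinfc{f_\rho}{\chi}{x}}$ needs the expected restricted influence to equal $(k/d)\,\dirinfc{f}{\chi}{x}$ exactly, and one must also reconcile the averaging over $\rho$ with the fact that Step~1 only guarantees the layered structure on a \emph{good} event. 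By contrast, the paper's \Clm{link} bypasses all of this by working directly with the min-cut dual of \Conj{rout}: the separation distance $r$ enters once, as the vertex capacity $r^2$, and the coloring $\chi$ is turned into a cut rather than into a path-harvest. If your goal is a combinatorial derivation of \Thm{dirtal} from hypercube routing, the paper's threshold-cut argument from \Conj{rout} is the cleaner target; your \Conj{glr}-based plan would, even if completed, still lean on the KMS machinery you are trying to avoid.
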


\noindent
In the next subsection, we give combinatorial  interpretations to each of these statements.
The reason for \Conj{glr} and a deeper study of hypercube routing was to get alternate proofs of \Thm{dirtal}.
A big mystery of all these directed isoperimetric inequalities is the appearance of $\eps_f$,
the distance to monotonicity, as a "directed version" of the variance of $f$. It appears
as if $\eps_f$ is the "right measure" of directed volume. We hope that alternate proofs of \Thm{dirtal}
may shed some light on this mystery.

\subsection{From flows to directed isoperimetry} \label{sec:flow}

In what follows, all flow networks are over the directed hypercube. 
There is a source set $S$,
and the aim is to route flow to the complement $\overline{S}$\footnote{Formally, one creates
a supernode $\supS$ that connects to $S$, and a supernode $\supT$ with connections from $T$.
All these connections have infinite capacity.}.
In the various routing theorems, we set different edge/vertex capacities and try to lower bound
the maximum flow from $S$ to $\overline{S}$. In all the flow settings, we have unit edge
capacities. 

\noindent
We start with a notion of the "directed volume" of a set.

\begin{definition} \label{def:dirvol} For $S \subseteq \{0,1\}^d$, the
directed volume of $S$, denoted $\dirvol{S}$ is 
$$ \max_{\substack{S' \subseteq S\\ T' \subseteq \overline{S}}}\Big\{ |S'| \ \Big| \ \exists \phi, (S', T'; \phi) \ \textrm{is a matched pair}\Big\} $$
Any matched pair $(S',T';\phi)$ that attains the maximum is called a \emph{directed volume certificate}.
\end{definition}
\noindent
We now explain why the directed Poincare inequality of \Thm{dirpoin} essentially shows that one can send $\dirvol{S}$ units of flow
from $S$ to $\overline{S}$ with unit edge capacities. This is a simple application of the max-flow-min-cut theorem, and we 
provide the proof for completeness.

\begin{theorem} \label{thm:flowpoin} Consider the directed hypercube flow network with unit edge capacities,
and source set $S$. The maxflow is at least $\dirvol{S}$.
\end{theorem}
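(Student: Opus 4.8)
The plan is to derive \Thm{flowpoin} from the max-flow--min-cut theorem, reducing it to a single combinatorial inequality about $S$ that is then settled by the Lehman--Ron--type routing of the previous sections. First I would identify the minimum cut: in this network all edges incident to $\supS$ or $\supT$ have infinite capacity, so any finite cut corresponds to a vertex set $\mathcal A$ with $\supS\in\mathcal A\not\ni\supT$ and no infinite-capacity edge leaving it, which forces $S\subseteq\mathcal A$ and $\overline S\cap\mathcal A=\emptyset$, hence $\mathcal A\cap V(H)=S$. Therefore the minimum cut is exactly the upward edge-boundary $\partial^+ S:=\{(u,v)\in E(H): u\in S,\ v\notin S\}$, and by max-flow--min-cut the value of the maxflow equals $|\partial^+ S|$. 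So it remains to prove $|\partial^+ S|\ge \dirvol{S}$.

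For this inequality the plan is to produce $\dirvol{S}$ edge-disjoint directed paths in $H$, each running from a vertex of $S$ to a vertex of $\overline S$; routing one unit along each then certifies the bound. Fix a directed volume certificate $(S',T';\phi)$, so $|S'|=\dirvol{S}$, $S'\subseteq S$, $T'\subseteq\overline S$, and $s\prec\phi(s)$ for all $s\in S'$. Since $S'\cap T'=\emptyset$, the pairs $\{s,\phi(s)\}$ form $|S'|$ pairwise disjoint chains, so a Lehman--Ron--type covering theorem --- the chain-covering result of~\cite{BaGr+24}, or the flow/cut argument of \Cref{sec:lr} applied to this configuration --- yields $|S'|$ vertex-disjoint directed paths in $H$, the $k$-th containing both $s_k$ and $\phi(s_k)$; truncating the $k$-th path to its $[s_k,\phi(s_k)]$ portion gives the required edge-disjoint $S$-to-$\overline S$ paths. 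Alternatively, and self-containedly, one forms the bipartite graph joining each $s\in S'$ to those edges $(u,v)\in\partial^+ S$ with $s\preceq u$ and $v\preceq\phi(s)$ (the boundary edges lying on a monotone path from $s$ to $\phi(s)$) and checks Hall's condition for a matching saturating $S'$: the condition for a subset $I\subseteq S'$ asks precisely that the sub-pair $\{(s,\phi(s)):s\in I\}$ forces at least $|I|$ distinct boundary edges, and this is established by the same "gateway" induction used to prove \Thm{lr}.

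The max-flow--min-cut reduction is routine; the real content, and the main obstacle, is the inequality $|\partial^+ S|\ge\dirvol{S}$. It is genuinely a fact about the directed hypercube and cannot be obtained merely by exhibiting a small vertex cover among the endpoints of $\partial^+ S$: a vertex of $S$ can be comparable to a far-away vertex of $\overline S$ while all of its up-neighbours lie inside $S$, so no vertex cover of the relevant comparability graph of size $|\partial^+ S|$ need exist. One really needs vertex- (or edge-) disjoint routes through the cube, i.e.\ exactly the Lehman--Ron machinery --- which is why this reduction, though billed as a simple application of max-flow--min-cut, quietly rests on \Thm{lr}.
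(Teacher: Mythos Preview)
Your reduction to $|\partial^+ S|\ge \dirvol{S}$ is fine (indeed, in this network the only finite $\supS$--$\supT$ cut is $\partial^+ S$). But the paper does \emph{not} prove this inequality via Lehman--Ron; it invokes the directed Poincar\'e inequality, \Thm{dirpoin} from~\cite{GGLRS00}, together with the identification $\eps_f=\dirvol{S}/2^d$ for the indicator $f$ of $S$. The inequality $|\partial^+ S|\ge\dirvol{S}$ \emph{is} the directed Poincar\'e inequality, so what you are proposing is to re-derive Poincar\'e from Lehman--Ron. That is a different (and more ambitious) route than the paper's.

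Your proposed derivation has a genuine gap. A directed volume certificate $(S',T';\phi)$ need not have $S'$ or $T'$ contained in level sets, nor even be antichains, so \Thm{lr} does not apply, and the chain-covering theorem of~\cite{BaGr+24} only guarantees as many vertex-disjoint paths as the size of the largest antichain in $S'\cup T'$, which can be strictly smaller than $|S'|$. (Concretely, if $s_1\prec s_2\prec \phi(s_1)\prec\phi(s_2)$, the minimal chain decomposition of $S'\cup T'$ has a single chain, and you cannot possibly cover both pairs by two \emph{vertex}-disjoint monotone paths.) The Hall argument you sketch is likewise incomplete: verifying Hall's condition for a subset $I\subseteq S'$ is exactly the same inequality $|\partial^+ S\cap G_{I,\phi(I)}|\ge|I|$ for a smaller instance, and the ``gateway induction'' of \Sec{lr} is tailored to level sets and does not transfer to arbitrary $I$. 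The edge-disjoint routing you need is precisely \Thm{sachdeva}, and the paper proves that theorem \emph{from} Poincar\'e, not the other way around. So your closing remark that the argument ``quietly rests on \Thm{lr}'' is incorrect: it rests on \Thm{dirpoin}.
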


\begin{proof} Consider the indicator Boolean function $f:\{0,1\}^d \to \{0,1\}$ where $f(x) = 1$ iff $x \in S$.
Using a standard connection between distance to monotonicity (Corollary 2 of~\cite{FLNRRS02}) one can 
argue that $\eps_f = \dirvol{S}/2^d$. 
Any $\supS$-$\supT$ cut must remove all edges from $S$ to $\overline{S}$. 
These are precisely the violated edges of $f$, which are at least $\eps_f 2^d = \dirvol{S}$ many (\Thm{dirpoin}).
The theorem follows from the duality between max-flow and min-cut. 
\end{proof}
\noindent
Thus, the basic directed Poincare inequality basically gives a flow bound for the directed hypercube
flow network. We will now interpret the more sophisticated isoperimetric theorems as more general
flow statements.
A crucial notion is the separation distance of a set.

\begin{definition} \label{def:sep} For any matched pair $(S,T;\phi)$, the separation distance
is $|S|^{-1} \sum_{s \in S} (|\phi(s)| - |s|) = |S|^{-1} \Big(\sum_{t \in T} |t| - \sum_{s \in S} |s|\Big)$.
Here $|x|$ denotes the number of $1$s in $x\in \{0,1\}^d$.
The {\bf separation distance} of $S$ is the smallest separation distance over directed volume certificates of $S$.
\end{definition}

A key theorem of \cite{ChSe13-j} shows that larger separation distance implies more (edge disjoint) flow.
This theorem is a strengthening of the directed Poincare inequality of \Thm{dirpoin},
and essentially a flow rewording of Lemma 2.6 of \cite{ChSe13-j}. The proof is entirely analogous
to that of \Thm{flowpoin} and is omitted.

\begin{theorem} \label{thm:cspoin} [Lemma 2.6,~\cite{ChSe13-j}] Consider the directed hypercube flow network with unit edge capacities, with source
set $S$ having separation distance $r$. The maxflow is at least $r \dirvol{S}$.
\end{theorem}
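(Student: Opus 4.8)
The plan is to follow the proof of \Thm{flowpoin} line for line, replacing the bare directed Poincar\'e inequality (\Thm{dirpoin}) by its separation-distance-strengthened form from \cite{ChSe13-j}. As there, let $f\colon\{0,1\}^d\to\{0,1\}$ be the indicator of $S$. Any cut separating $S$ from $\overline S$ in the unit-edge-capacity network must in particular sever every length-one path from $S$ to $\overline S$, that is, every hypercube edge $(x,y)$ with $x\in S$ and $y\notin S$; but these are exactly the violated edges of $f$. Hence every such cut contains all violated edges of $f$, so the minimum cut has size at least the number of violated edges of $f$. By max-flow--min-cut duality it therefore suffices to prove that $f$ has at least $r\,\dirvol{S}$ violated edges.

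I would then unwind the dictionary between $f$ and the combinatorial quantities of \Def{dirvol} and \Def{sep}. As already used in \Thm{flowpoin}, Corollary~2 of \cite{FLNRRS02} gives $\eps_f 2^d=\dirvol{S}$, and in fact identifies the matched pairs of maximum size $\dirvol{S}$ with the directed volume certificates of $S$; consequently the minimum average gap over those certificates --- which by hypothesis equals $r$ --- is precisely the separation distance of $f$ in the sense of \cite{ChSe13-j}. Lemma~2.6 of \cite{ChSe13-j} is exactly the isoperimetric statement that the number of violated edges of a Boolean function is at least its separation distance times $\eps_f 2^d$. Applying it to $f$ and substituting the two dictionary identities gives at least $r\cdot\eps_f 2^d=r\,\dirvol{S}$ violated edges (with the same normalization conventions for violated-edge counts as in \Thm{flowpoin}), which combined with the previous paragraph proves the theorem.

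The only non-formal ingredient is Lemma~2.6 of \cite{ChSe13-j} itself, and this is where the factor $r$ genuinely originates: morally, if every optimal matched pair is forced to climb at least $r$ levels, one cannot get away with only $\dirvol{S}$ violated edges, since too small a violated-edge set would let one ``shortcut'' some optimal pair down to a strictly shallower one, contradicting the minimality defining $r$. Since we are permitted to assume that lemma, the argument above is pure packaging, and the one place demanding care is the dictionary step: checking that the separation distance of $S$ as defined here (a minimum over directed volume certificates) is literally the function-level quantity appearing in \cite{ChSe13-j}, and that $\dirvol{S}=\eps_f 2^d$, both of which follow from the FLNRRS02 correspondence between matched pairs and symmetric differences with monotone functions. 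If one instead wanted a self-contained proof, the main obstacle would be re-deriving Lemma~2.6, for instance by a level-by-level telescoping in which each of the $r$ ``depths'' below an optimal matched pair is shown to contribute a further $\eps_f 2^d$ violated edges.
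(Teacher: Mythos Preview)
Your proposal is correct and takes essentially the same approach as the paper: the paper explicitly states that the proof is ``entirely analogous to that of \Thm{flowpoin} and is omitted,'' and your write-up carries out exactly that analogy, replacing \Thm{dirpoin} by Lemma~2.6 of \cite{ChSe13-j} and checking the dictionary $\eps_f 2^d=\dirvol{S}$ together with the identification of separation distances. The one cosmetic remark is that your inequality ``mincut $\geq$ number of violated edges'' is in fact an equality here (the violated edges themselves form a cut), but only the direction you state is needed.
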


What if we desire \emph{vertex} disjoint paths? Lemma 2.5 of~\cite{ChSe13-j}
answers this question, and the central tool is the Lehman-Ron theorem.
Together, the two theorems above directly imply the directed Margulis statement of \Thm{dirmarg}.
\begin{theorem} \label{thm:cslr} [Lemma 2.5,~\cite{ChSe13-j}] Consider a flow network with unit vertex capacities,
with source set $S$ having separation distance $r$. The maxflow is at least $\dirvol{S}/32r$.
\end{theorem}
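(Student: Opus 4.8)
The plan is to deduce \Thm{cslr} by combining the Lehman-Ron theorem (\Thm{lr}) with the edge-capacitated flow bound of \Thm{cspoin}, in the spirit of Lemma 2.5 of \cite{ChSe13-j}. The starting point is a directed volume certificate $(S',T';\phi)$ for $S$ achieving separation distance $r$; so $|S'| = \dirvol{S}$, $S'\subseteq S$, $T'\subseteq\overline{S}$, and the average layer-gap $|S'|^{-1}\sum_{s}(|\phi(s)|-|s|) = r$. A vertex-capacitated max-flow from $S$ to $\overline{S}$ is at least a vertex-capacitated max-flow from $S'$ to $T'$, so it suffices to route $\dirvol{S}/32r$ vertex-disjoint paths between $S'$ and $T'$. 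The Lehman-Ron theorem gives vertex-disjoint paths when $S'$ and $T'$ each lie inside a single layer, so the whole game is to reduce the general matched pair to that layered situation while losing only a factor of $32r$.

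The key step is a bucketing/averaging argument. First I would discard the (at most half, say) mass of pairs whose gap $|\phi(s)|-|s|$ is much larger than $r$ — by Markov at most $|S'|/2$ pairs have gap $\geq 2r$ — leaving a sub-pair of size $\geq |S'|/2$ in which every gap lies in $[1,2r)$ (pairs of gap $0$ cannot occur since $s\prec\phi(s)$, and one can also drop pairs, keeping the matched-pair property since $\phi$ restricted to any subset is still a valid bijection onto its image). Next, partition the remaining pairs according to the ordered pair $(|s|\bmod m,\ \lfloor |s|/m\rfloor \bmod 2)$-type buckets — more precisely, choose a suitable granularity so that within one bucket all sources lie in one layer $L_i$ and all the corresponding targets lie in one layer $L_j$; since $|s|\in\{0,\dots,d\}$ and the gaps are bounded by $2r$, a standard counting shows some bucket contains a $1/(O(r)\cdot O(r)) = 1/O(r^2)$... here one should be more careful: the right split is to first fix the source layer and then note the target layers span an interval of length $2r$, so a double-counting over at most... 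I would instead follow the cleaner route: bucket by $|s|$ alone into residue classes that also control $|\phi(s)|$, landing the constant at $32r$. Whatever the exact constants, the upshot is a matched sub-pair $(S'',T'';\phi'')$ with $S''\subseteq L_i$, $T''\subseteq L_j$, $i<j$, and $|S''|\geq |S'|/(32r) = \dirvol{S}/(32r)$.

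Once that layered matched pair is in hand, \Thm{lr} produces $|S''|$ vertex-disjoint paths between $S''\subseteq L_i$ and $T''\subseteq L_j$ inside the hypercube. Prepending the supernode $\supS$ (connected to $S\supseteq S''$) and appending $\supT$ (connected from $\overline{S}\supseteq T''$), these give $|S''| \geq \dirvol{S}/(32r)$ vertex-disjoint $\supS$-$\supT$ paths, hence a feasible integral flow of that value in the unit-vertex-capacity network, proving the theorem. The main obstacle I expect is purely bookkeeping: getting the factor to come out as exactly $32r$ (rather than, say, $\Theta(r^2)$) requires the bucketing to be organized so that fixing the source layer already pins down the target layer up to a choice among $O(r)$ options and then a single averaging over those $O(r)$ options — together with the factor-$2$ loss from the Markov truncation and a factor-$2$ loss from choosing the winning residue class, $2\cdot 2\cdot O(r)$ must be engineered to land at $32r$. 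Since the statement is quoted verbatim from \cite{ChSe13-j} (Lemma 2.5), I would simply invoke that lemma's argument, or reproduce it with these steps, trusting that the constants check out as in the original.
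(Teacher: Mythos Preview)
The paper does not actually prove \Thm{cslr}; it merely cites it as Lemma~2.5 of~\cite{ChSe13-j} and says ``the central tool is the Lehman-Ron theorem.'' So there is no in-paper proof to compare against directly. Your high-level plan---extract a layered matched sub-pair and then invoke \Thm{lr}---is exactly the right shape, and matches what the paper says.

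The gap is in your bucketing step, and it is the gap you yourself flag. After the Markov truncation you have $\geq |S'|/2$ pairs with gap in $[1,2r)$, but the \emph{sources} still range over potentially $\Theta(d)$ different layers, not $\Theta(r)$. Bucketing by $|s|\bmod K$ for $K\approx 2r$ does make the layer-intervals of different residue classes disjoint, but within a single source layer the targets still spread over up to $2r-1$ layers, so you cannot apply \Thm{lr} without a further split by target layer---and that second split costs another factor of $r$, landing you at $\Theta(r^2)$ rather than $\Theta(r)$. No simple residue-class trick fixes this, because \Thm{lr} requires \emph{both} endpoints to sit in a single level.

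The missing ingredient is the \emph{uncrossing} argument that the paper records as \Lem{uncross} (Claims~2.7.2--2.7.3 of~\cite{ChSe13-j}): one can replace the certificate by a family of level-set matched pairs $(S_i,T_i;\phi_i)$ with $\sum_i |S_i|\geq \dirvol{S}/4$ and the property that no vertex lies in more than $2r$ of the cover graphs $G_{S_i,T_i}$. Applying \Thm{lr} to each $(S_i,T_i)$ and summing yields a flow of value $\geq \dirvol{S}/4$ in which every vertex carries at most $2r$ units; scaling by $1/(2r)$ and invoking integrality of vertex-capacitated maxflow gives the $\Omega(\dirvol{S}/r)$ bound. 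Your Markov-then-bucket outline is morally a first approximation to this, but the uncrossing is what buys the linear-in-$r$ loss instead of quadratic.
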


\noindent
This brings us to a sort of "intellectual starting point" for this paper. The theorems
above clearly show how directed isoperimetry and flows are intimately connected. Moreover,
statements like \Thm{dirmarg} suggest relations between flows with edge capacities,
and flow with vertex capacities. We were motivated to see if the KMS theorem (\Thm{dirtal})
could be proven from a flow perspective. 

\begin{conjecture} \label{conj:rout} Let source set $S$ have separation distance $r$.
Consider a flow network with unit edge capacities and vertex capacities $r^2$. The 
maxflow is at least $\Omega(r\dirvol{S})$.
\end{conjecture}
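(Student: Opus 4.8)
\noindent\textbf{Proof proposal for \Conj{rout}.}
The plan is to follow the template that produced \Thm{flowpoin}, \Thm{cspoin}, and \Thm{cslr}: fix a directed-volume certificate $(S^*,T^*;\phi)$ of $S$ achieving the separation distance $r$ and pass to max-flow--min-cut duality. Because we now have two kinds of capacities, the dual object is a \emph{mixed cut}: a set $C\subseteq V(H)$ of vertices (cost $r^2$ each) together with a set $F\subseteq E(H)$ of edges (cost $1$ each) whose removal separates $S$ from $\overline S$. So it suffices to exhibit a feasible flow of value $\Omega(r\dirvol{S})$, i.e.\ a family of $S$-to-$\overline S$ paths that is edge-disjoint and in which no vertex lies on more than $r^2$ paths.

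The first step is to reduce to a \emph{single scale}. Bucket the pairs of $(S^*,T^*;\phi)$ by chain length $\ell(s):=|\phi(s)|-|s|$ into $O(\log d)$ dyadic classes; since $\sum_s\ell(s)=r\dirvol{S}$, some class $B$ with common scale $\rho$ (so $\rho\le\ell(s)<2\rho$ for $s\in B$) satisfies $|B|\cdot\rho=\Omega(r\dirvol{S}/\log d)$. Now I would invoke \Conj{glr} --- or, more accurately, its natural common generalization with the antichain Lehman--Ron theorem of~\cite{BaGr+24}, namely that a matched pair $(B,\phi(B);\phi)$ with $\ell(s)\ge\rho$ for all $s\in B$ admits $\rho$ pairwise edge-disjoint LR solutions. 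Each LR solution is $|B|$ vertex-disjoint paths, and a vertex lies on at most one path from each solution, so keeping $m:=\min(\rho,r^2)$ of the $\rho$ solutions yields $m|B|$ paths from $B$ to $\phi(B)$ that are edge-disjoint and put at most $m\le r^2$ paths on every vertex --- a feasible flow of value $\min(\rho,r^2)\cdot|B|$.

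If the heavy scale satisfies $\rho\le r^2$, then the flow value is $\rho|B|=\Omega(r\dirvol{S}/\log d)$, which gives the conjecture up to the $\log d$ factor (which one would then try to remove by a sharper, PRW-style argument~\cite{PRW22}). The difficulty --- and the step I expect to be the real obstacle --- is the regime $\rho>r^2$. There $m|B|=r^2|B|$, and since $|B|$ can be as small as $\Theta(r\dirvol{S}/(\rho\log d))$, the flow we build is only $O(r^3\dirvol{S}/(\rho\log d))=o(r\dirvol{S})$; in fact the source set $B$ then has total out-capacity only $r^2|B|=o(r\dirvol{S})$, so \emph{no} routing out of $B$ alone can meet the bound. (Such configurations genuinely occur when $r\ll\sqrt d$: e.g.\ a certificate with $\approx\dirvol{S}$ length-one chains and $\approx r\dirvol{S}/d$ chains of length $d$ has separation distance $\approx r$ yet concentrates almost all of its chain length above $r^2$.) The very long chains must therefore be routed \emph{together with} the flow carried by the short chains, with which they share vertex capacity in the interior of the cube, so a scale-by-scale argument cannot suffice.

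Overcoming this seems to require one of two things. Option (i): a structural lemma showing that a certificate \emph{minimizing} the separation distance cannot concentrate an $\Omega(1)$ fraction of its total chain length in chains much longer than $r^2$ --- then the dyadic buckets with $\rho\le r^2$ already carry the bulk of the weight and the argument above closes. Option (ii): abandon the bucketing entirely and start from the edge-disjoint flow of value $r\dirvol{S}$ guaranteed by \Thm{cspoin}, then show (using the vertex-transitivity and boundary structure of the cube, much as in the gateway/projection arguments of \Sec{lr} and \Sec{gen}) that it can be re-routed so that every vertex carries $\le r^2$ units while only a constant fraction of the value is lost. I believe it is precisely this last step where \Conj{rout} meets \Conj{glr}, as anticipated in \Sec{conj}: a robust, "congestion-controlled" form of the generalized Lehman--Ron theorem would let one spread the \Thm{cspoin} flow over the cube with congestion proportional to $r$ per scale and $r^2$ in total.
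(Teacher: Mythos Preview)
The statement you are attempting to prove is \emph{not} proven in the paper --- it is explicitly stated as an open conjecture (\Conj{rout}). The paper only shows that \Conj{rout} \emph{implies} the robust directed Talagrand inequality (\Clm{link}), and then in \Sec{conj} speculates on how \Conj{glr} might be brought to bear on \Conj{rout}. So there is no proof in the paper to compare against; your proposal should be read as a strategy toward an open problem.

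On its own merits, your proposal has two genuine gaps, both of which you yourself flag. First, your argument rests on \Conj{glr} (and in fact a strengthening of it to antichain matched pairs), which is itself open; \Thm{dist2} only gives the case $r=2$. Second, even granting \Conj{glr} in full, your dyadic bucketing breaks down in the regime $\rho > r^2$, as you correctly diagnose: the heavy bucket can be too small to support $\Omega(r\dirvol{S})$ flow by itself, and your concrete example shows this is not an artifact of the analysis. Neither of your proposed fixes --- a structural lemma ruling out long chains in a distance-minimizing certificate, or a congestion-reducing rerouting of the \Thm{cspoin} flow --- is carried out, so the proposal remains a sketch rather than a proof.

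It is worth noting that the paper's own speculative discussion in \Sec{conj} follows essentially the same route you take (uncross into level-set matched pairs via \Lem{uncross}, then hope to apply \Conj{glr} on each piece), and runs into a closely related obstruction: overlaying the per-piece flows gives edge congestion $O(r)$ rather than $1$, which the authors explicitly say ``is not useful for \Conj{rout}.'' So your proposal is in line with the paper's intuitions, but it does not go beyond what the paper already identifies as the barrier.
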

Note that the above is a {\em simultaneous} strengthening of \Thm{cspoin} and \Thm{cslr}; if we remove either the edge capacity restriction
or the vertex capacity restriction, then we get the above theorems. We show that \Conj{rout} implies the robust Talagrand isoperimetry theorem.

\begin{claim} \label{clm:link} \Conj{rout} implies \Thm{dirtal}.
\end{claim}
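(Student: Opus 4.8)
The plan is to fix a Boolean function $f$, let $S = f^{-1}(1)$ (treated as a set in the hypercube), and apply \Conj{rout} to recover \Thm{dirtal}. First I would recall, via the standard connection (Corollary~2 of~\cite{FLNRRS02} and its refinement in~\cite{ChSe13-j}), that $\dirvol{S}/2^d = \eps_f$ and that the separation distance of $S$, call it $r$, is controlled: more precisely, there is a directed volume certificate $(S',T';\phi)$ with $|S'| = \dirvol{S}$ achieving average displacement close to $r$. The point of using separation distance (rather than working with worst-case chains) is that the averaged quantity $r$ is exactly the scale that appears in the Talagrand inequality $2^{-d}\sum_x \sqrt{\dirinfc{f}{\chi}{x}}$; one should think of the generic vertex on a routing path as carrying $\Theta(\sqrt{r})$ of the ``$\ell_2$ mass.''

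The main step is the reduction itself. Given the bicoloring $\chi$, I would build the flow network of \Conj{rout}: unit edge capacities and vertex capacities $r^2$ on every hypercube vertex, with source set $S$. By \Conj{rout} the maxflow is $\Omega(r\dirvol{S}) = \Omega(r\eps_f 2^d)$. Decompose this flow into $\Omega(r \eps_f 2^d)$ unit paths from $S$ to $\overline{S}$; every such path must use at least one violated edge (an edge from $f^{-1}(1)$ to $f^{-1}(0)$). Here is where the coloring enters: a path from $S$ to $\overline S$ must contain a violated edge $(u,v)$ whose color is $\chi(u,v)$; by a parity/charging argument one shows that it contains a violated edge incident to some vertex $x$ whose color equals $f(x)$, i.e.\ an edge counted by $\dirinfc{f}{\chi}{x}$ --- the standard trick is that along the path the color alternates relative to the value of $f$, so \emph{some} monochromatic-to-$f(x)$ violated edge must be crossed. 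Thus each unit path can be charged to a vertex $x$ and one of its $\dirinfc{f}{\chi}{x}$ relevant violated edges. Now count: vertex $x$ has vertex capacity $r^2$, so at most $r^2$ paths pass through $x$; and $x$ is incident to $\dirinfc{f}{\chi}{x}$ such edges, each of unit capacity, so at most $\dirinfc{f}{\chi}{x}$ paths can be charged to $x$ through its edges. Hence the number of paths charged to $x$ is at most $\min(r^2, \dirinfc{f}{\chi}{x})$ --- wait, more carefully, one gets a bound of the form $\sqrt{r^2}\cdot\sqrt{\dirinfc{f}{\chi}{x}} = r\sqrt{\dirinfc{f}{\chi}{x}}$ by combining the two capacity constraints (a path through $x$ uses one of the $r^2$ ``vertex tokens'' and one of the edge tokens, and $\min(a,b)\le\sqrt{ab}$). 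Summing over $x$:
\[
\Omega(r\eps_f 2^d) \;\le\; \sum_x (\text{paths charged to } x) \;\le\; \sum_x r\sqrt{\dirinfc{f}{\chi}{x}},
\]
and dividing by $r 2^d$ gives $2^{-d}\sum_x \sqrt{\dirinfc{f}{\chi}{x}} = \Omega(\eps_f)$, which is \Thm{dirtal}.

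The step I expect to be the main obstacle --- and the one requiring genuine care rather than bookkeeping --- is the charging argument that converts a unit $S$-to-$\overline S$ path into a violated edge $(u,v)$ with $\chi(u,v) = f(u)$ (or $=f(v)$, matching the definition of $\dirinfc{f}{\chi}{\cdot}$). The subtlety is that a path may cross many violated edges, and a priori all of them could be ``wrong-colored'' relative to the value of $f$ at their endpoints; one must argue that this cannot happen for \emph{every} violated edge on the path, e.g.\ by tracking how $f$ changes (it goes from $1$ at the $S$-end to $0$ at the $\overline S$-end) together with how $\chi$ is forced to behave, and invoking a pigeonhole on the sign pattern. This is precisely the combinatorial core of the original KMS argument, so I would expect to reuse (a flow-adapted version of) their ``good vertex'' / coloring-balance lemma rather than reprove it from scratch; the rest of the reduction is a clean double-counting against the two capacity budgets.
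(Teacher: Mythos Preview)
Your flow-decomposition argument is correct and is the LP-dual of the paper's own proof, which works on the mincut side instead. The paper sets $d(x):=\dirinfc{f}{\chi}{x}$, builds a cut by cutting the $d(x)$ colored violated edges at $x$ whenever $d(x)\le r^2$ and cutting the vertex $x$ (cost $r^2$) otherwise, notes that this severs every violated edge and hence every $S$--$\overline S$ path, and then lower-bounds the cut value $\sum_{d(x)\le r^2} d(x) + r^2\,|\{x:d(x)>r^2\}|$ by $\Omega(r\,\dirvol{S})$ via \Conj{rout}. A two-case split on which half dominates yields $\sum_x\sqrt{d(x)}=\Omega(\dirvol{S})$, which is exactly what your $\min(r^2,d(x))\le r\sqrt{d(x)}$ step produces from the primal side. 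The two arguments are the same inequality read from opposite ends of duality; neither is more general, though the cut version avoids path decomposition.

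However, the step you flag as the ``main obstacle'' is a non-issue, and invoking a KMS-style coloring-balance lemma would be overkill. For any violated edge $(u,v)$ one has $f(u)=1$ and $f(v)=0$; since $\chi(u,v)\in\{0,1\}$, it equals exactly one of $f(u)$, $f(v)$, so every violated edge is counted by $\dirinfc{f}{\chi}{\cdot}$ at \emph{precisely one} of its endpoints. Your charging rule is therefore just: pick any violated edge on the path and charge the path to whichever endpoint owns it under $\chi$. There is no parity or alternation argument to carry out; this automatic ownership is the entire content of the ``robust'' (bicolored) formulation of \Thm{dirtal}. With that clarification your bound of at most $\min\big(r^2,\dirinfc{f}{\chi}{x}\big)$ paths charged to $x$, and the subsequent $\min(a,b)\le\sqrt{ab}$ step, go through cleanly.
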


\begin{proof} Consider a Boolean function $f:\{0,1\}^d \to \{0,1\}$.
Consider any bicoloring $\chi$ of the violated edges.
Our aim is to lower bound $\sum_x \sqrt{\dirinfc{f}{\chi}{x}}$. 

Let $S$ be the set of $1$-valued points. By \Conj{rout} and the maxflow-mincut theorem, the mincut
of the flow network (where edges have capacity $1$ and vertices have capacity $r^2$) is at least $Cr \dirvol{S}$ for some constant $C > 0$. Note that all edges from $S$
to $\overline{S}$ must be cut; moreover any separation of (the endpoints of) these edges is
a valid $(S,\overline{S})$ cut. In terms of $f$,
these are precisely the violated edges. 

Let us use $\chi$ to construct a cut. For convenience, let $d(x)$ denote $\dirinfc{f}{\chi}{x}$.
If $d(x) \leq r^2$, we cut all violated edges incident to $x$. Otherwise, we cut the vertex $x$.
The total cut value is $\sum_{x: d(x) \leq r^2} d(x) + r^2 |\{x \ | \ d(x) > r^2\}|$.
By \Conj{rout}, the cut value is at least $Cr \dirvol{S}$. We split into two cases.

{\em Case 1, $\sum_{x: d(x) \leq r^2} d(x) \geq Cr\dirvol{S}/2$.}
Observe that $\sum_{x: d(x) \leq r^2} \sqrt{d(x)\cdot d(x)}  \leq r \sum_{x: d(x) \leq r^2} \sqrt{d(x)}$. 
Thus, $\sum_x \sqrt{\dirinfc{f}{\chi}{x}} \geq C\dirvol{S}/2 = C	\eps_f 2^d/2$.

{\em Case 2, $\sum_{x: d(x) \leq r^2} d(x) <C r\dirvol{S}/2$.} So $r^2 |\{x \ | \ d(x) > r^2\}| \geq C r \dirvol{S}/2$,
implying $|\{x \ | \ d(x) > r^2\}| \geq C\dirvol{S}/(2r)$. We can lower bound
$\sum_x \sqrt{d(x)} \geq r \sum_{x: d(x) > r^2} \geq C\dirvol{S}/2= C	\eps_f 2^d/2$.
\end{proof}

We believe that \Conj{rout} is stronger than \Thm{dirtal}, because it explicitly involves
the separation distance of $S$. 

As an aside, the connection between flows and directed isoperimetry
resolves an open question in~\cite{open14} (Pg 19, "Routing on the hypercube").
It is actually a direct consequence of \Thm{dirpoin}.

\begin{theorem} \label{thm:sachdeva} Let $(S, T, \phi)$ be a matched pair
where $S$ and $T$ are disjoint. There exist $|S|$ monotone edge disjoint paths
from $S$ to $T$.
\end{theorem}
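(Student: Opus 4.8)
The plan is to recast the statement as a max-flow bound and read it off from \Thm{flowpoin} (equivalently, \Thm{dirpoin}). Build a flow network on $V(H)$ together with the supernodes \supS and \supT: add an edge $(\supS,s)$ for each $s\in S$ and an edge $(t,\supT)$ for each $t\in T$, each of capacity $1$, and give every hypercube edge capacity $1$. Since $S$ and $T$ are disjoint, every directed $\supS$--$\supT$ path traverses at least one hypercube edge, so it restricts to a genuine monotone path from a vertex of $S$ to a vertex of $T$. By integrality, a max-flow of value $|S|$ decomposes into $|S|$ edge-disjoint $\supS$--$\supT$ paths; these are edge-disjoint on the cube and use distinct unit-capacity terminal edges at \supS and at \supT, hence start at distinct vertices of $S$ and end at distinct vertices of $T$. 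As the max-flow is at most $|S|$ trivially, by the max-flow--min-cut theorem it suffices to show that every $\supS$--$\supT$ edge cut has size at least $|S|$.

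Fix a minimum edge cut and write it as $F_{\supS}\cup F_{\supT}\cup F$, where $F_{\supS}$ are the cut edges incident to \supS, $F_{\supT}$ those incident to \supT, and $F\subseteq E(H)$ the rest. Let $S_0\subseteq S$ and $T_0\subseteq T$ be the terminals whose incident terminal edge is \emph{not} cut, so $|F_{\supS}|=|S|-|S_0|$ and $|F_{\supT}|=|S|-|T_0|$. Since the cut separates \supS from \supT, no monotone path from $S_0$ to $T_0$ avoids $F$; let $W$ be the set of vertices reachable from $S_0$ by a monotone path avoiding $F$. Then $S_0\subseteq W$, $W\cap T_0=\emptyset$, and (appending one edge to a reaching path) every hypercube edge from $W$ to $\overline{W}$ lies in $F$; write $\partial^+(W)$ for this set of boundary edges, so $|F|\ge|\partial^+(W)|$. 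Now $S':=\{s\in S_0:\phi(s)\in T_0\}$ satisfies $|S'|\ge|S_0|+|T_0|-|S|$ (as $\phi$ is injective and maps $S_0\setminus S'$ into $T\setminus T_0$), and $(S',\phi(S');\phi|_{S'})$ is a matched pair with $S'\subseteq W$ and $\phi(S')\subseteq\overline{W}$, so $\dirvol{W}\ge|S'|\ge|S_0|+|T_0|-|S|$ by \Def{dirvol}. Applying \Thm{flowpoin} with source set $W$ --- whose max-flow to $\overline{W}$ is at most $|\partial^+(W)|$, since deleting $\partial^+(W)$ disconnects $W$ from $\overline{W}$ --- gives $|\partial^+(W)|\ge\dirvol{W}\ge|S_0|+|T_0|-|S|$. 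Summing the three parts, the cut has size at least $(|S_0|+|T_0|-|S|)+(|S|-|S_0|)+(|S|-|T_0|)=|S|$; and when $|S_0|+|T_0|\le|S|$ the bound $|F_{\supS}|+|F_{\supT}|\ge|S|$ already suffices. Hence the min cut is at least $|S|$, and the theorem follows.

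Beyond routine flow bookkeeping, the one real step is the inequality $|\partial^+(W)|\ge\dirvol{W}$ --- that the upward edge-boundary of any vertex set is at least its directed volume. This is exactly the directed Poincar\'e inequality (\Thm{dirpoin}), repackaged as \Thm{flowpoin}, and it is also where the matched-pair hypothesis enters (to produce the directed-volume certificate for $W$). I expect this to be the conceptual crux; the reduction to a min-cut statement and the split of the cut into its three pieces are mechanical.
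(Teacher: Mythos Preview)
Your proof is correct and uses the same core idea as the paper: reduce to a min-cut lower bound and invoke the directed Poincar\'e inequality. The paper's execution differs slightly in packaging. It does not cap the terminal edges, so the mincut is a pure hypercube edge set; from this cut it builds an explicit Boolean function $f$ (value $1$ on the $S$-side, $0$ on the $T$-side, extended monotonically to vertices off any $S$--$T$ path) and applies \Thm{dirpoin} to $f$ directly, rather than applying \Thm{flowpoin} to a reachability set $W$ as you do. Your version stays entirely in the flow/cut language and avoids the Boolean-function construction and its monotone-extension step; the price is the extra bookkeeping of splitting the cut into $F_{\supS}\cup F_{\supT}\cup F$ and tracking $S_0,T_0,S'$, but this also buys you the mild strengthening that the $|S|$ paths can be taken with pairwise distinct endpoints in $S$ and in $T$, which the paper's argument does not guarantee.
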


\begin{proof} Consider the directed hypercube and take a mincut separating $S$ from $T$. 
Construct a Boolean function that assigns $1$ to the "$S$-side", and $0$ to the "$T$-side". 
All remaining vertices can be assigned values such that they do not participate in any monotonicity violation. 
Note that these vertices cannot be on a directed path from $S$ to $T$.
(Process vertices according to the partial order. For $x$, set $f(x)$ to be $\max_{y \prec x: f(y) \ \textrm{assigned}} f(y)$. 
If no $f(y)$ is assigned, set $f(x) = 0$. Observe that if $f(x)$ is assigned value $1$, then $x$
must be greater than some point in $S$. This means that $x$ cannot be less than any point in $T$,
and hence does not create monotonicity violations.)

This function has distance to monotonicity at least $|S|/2^n$. So by \Thm{dirpoin}, there are at least 
$|S|$ edges which have value $(1,0)$. These are precisely cut edges, from the $S$-side to the $T$-side. 
Hence, the cut value is at least $|S|$. Set up a flow problem on the directed hypercube
where every edge has unit capacity, vertices in $S$ are sources, and vertices in $T$ are sinks.
By the maxflow-mincut theorem, there is a flow of value at least $|S|$. 
This flow gives edge-disjoint paths from S to T.
\end{proof}

\subsection{Connections between conjectures} \label{sec:conj}

From the perspective of monotonicity testing and directed isoperimetry, \Conj{rout} is more important.
From a purely combinatorial (and maybe aesthetic) viewpoint, \Conj{glr} is more appealing.
We believe that a proof of \Conj{glr} will shed light on \Conj{rout}. This section is speculative,
but gives some of the original motivations for studying \Conj{glr}. 

An uncrossing argument of~\cite{ChSe13-j} relates general matched pairs to matched pairs contained in level sets.
(These arguments are in Section 2.4 of~\cite{ChSe13-j}, especially Claim 2.7.2 and Claim 2.7.3.)

\begin{lemma} \label{lem:uncross} Consider a set $S$ with separation distance $r$. There exist
a collection of matched pairs $(S_1, T_1, \phi_1), (S_2, T_2, \phi_2), \ldots$
with the following properties.
\begin{asparaitem}
	\item $\bigcup_i S_i \subseteq S$, $\bigcup_i T_i \subseteq \overline{S}$.
	\item $\sum_i |S_i| \geq \dirvol{S}/4$.
	\item Each $S_i$ (and $T_i$) is contained in a level set.
	\item No vertex is present in more than $2r$ cover graphs $G_{S_i, T_i}$.
\end{asparaitem}
\end{lemma}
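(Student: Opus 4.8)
The plan is to follow the uncrossing strategy from Section~2.4 of~\cite{ChSe13-j}, adapted to the language of directed volume certificates used here. First I would fix a directed volume certificate $(S^*, T^*; \phi)$ attaining $\dirvol{S}$ with separation distance $r$, so that $|S^*| = \dirvol{S}$ and $|S^*|^{-1}\sum_{s\in S^*}(|\phi(s)| - |s|) = r$. The goal is to carve this single matched pair into many matched pairs, each living between two fixed level sets, while controlling how many times any given vertex gets reused across the resulting cover graphs. The natural way to slice is by \emph{target level}: for an integer $\ell$, consider the pairs $(s,\phi(s))$ with $|s|$ and $|\phi(s)|$ in prescribed ranges, and further uncross within each slice so that both endpoints land in a single level.

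The key steps, in order, would be: (1) Partition the index set of $(S^*,T^*;\phi)$ into groups according to the pair $(|s|, |\phi(s)|)$; there are at most $\binom{d+1}{2}$ groups, but by an averaging/pigeonhole argument on the chain-length $|\phi(s)|-|s|$ (which averages to $r$) one can restrict to groups where this length is $O(r)$ and still retain a constant fraction of $S^*$, giving the factor $4$ loss in the second bullet. (2) Within a group, all sources share a common level $i$ and all targets share a common level $j$ with $j - i = O(r)$, so the third bullet is immediate for that group; this is where I would invoke the Lehman--Ron theorem (\Thm{lr}) or its vertex-disjoint-path guarantee to certify that the matched pair is "honest" in the sense needed. (3) For the fourth bullet — the load condition — I would argue that a fixed vertex $v$ can appear in the cover graph $G_{S_i,T_i}$ only if there is a pair $(s,t)$ in group $i$ with $s \preceq v \preceq t$; since the chain lengths are $O(r)$ and the certificate is a bijection (so the pairs are "disjoint" as chains in a suitable sense, via the uncrossing), the number of groups whose cover graph can contain $v$ is $O(r)$. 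The uncrossing (Claim~2.7.2 and Claim~2.7.3 of~\cite{ChSe13-j}) is precisely the tool that replaces an arbitrary matched pair by one whose chains are "laminar enough" that this counting works, and re-deriving or citing that uncrossing is the technical core.

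The main obstacle I expect is step (3): controlling the multiplicity of vertices across cover graphs. A single vertex $v$ could a priori lie between $s$ and $t$ for many different pairs, and it is not automatic that grouping by levels bounds this by $2r$. The uncrossing argument must be set up so that, after replacing $\phi$ by an uncrossed bijection, for each vertex $v$ the set of chains $(s,\phi(s))$ straddling $v$ forms a "nested" family whose size is controlled by the separation distance; the bound $2r$ rather than $r$ presumably comes from splitting into the "upward" contribution ($v$ close to $s$) and the "downward" contribution ($v$ close to $t$), each bounded by $r$. I would lean on the cited claims of~\cite{ChSe13-j} for the uncrossing itself and focus the new exposition on verifying that the four bullet points hold for the resulting family, since the uncrossing machinery is standard but the bookkeeping for the simultaneous level-set containment and bounded-load conditions is where the care is needed.
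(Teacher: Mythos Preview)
The paper does not give a self-contained proof of this lemma; it simply records it as a consequence of the uncrossing arguments in Section~2.4 of~\cite{ChSe13-j} (specifically Claim~2.7.2 and Claim~2.7.3 there). Your proposal is aligned with this in that you, too, plan to import those claims, and you correctly identify the load bound (the fourth bullet) as the crux. Two points in your sketch would not survive as written, however.

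First, the appeal to the Lehman--Ron theorem in step~(2) is out of place. Once a group already has all sources in a single level $i$ and all targets in a single level $j$, the matched-pair structure is exactly what the lemma asks for; \Thm{lr} produces vertex-disjoint \emph{paths}, which is neither needed nor relevant for certifying that $(S_i,T_i;\phi_i)$ is a matched pair contained in level sets.

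Second, and more substantively, your argument for the $2r$ load bound does not close. You write that $v\in G_{S_i,T_i}$ forces a \emph{matched} pair $(s,\phi(s))$ in group $i$ with $s\preceq v\preceq \phi(s)$, but the cover graph only asks for some $s\in S_i$ and some $t\in T_i$ with $s\preceq v\preceq t$, not a matched pair. And even granting your premise, the counting you describe---short chains plus ``disjoint'' chains---does not give $O(r)$: a vertex at level $\ell$ is a priori eligible for every group indexed by a level pair $(a,b)$ with $a\le\ell\le b$ and $b-a\le 2r$, and there are $\Theta(r^2)$ such pairs. The uncrossing in~\cite{ChSe13-j} does real structural work beyond ``the chains are disjoint''; Claims~2.7.2--2.7.3 are precisely where the overlap is controlled, and your ``upward/downward, each bounded by $r$'' heuristic is not a substitute for them. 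So your instinct to lean on those claims is right, but the surrounding explanation you give for why the bound holds should be dropped rather than presented as the mechanism.
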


The main upshot of this lemma is that one can "break up" the $(S, \overline{S})$
routing problem into a collection of $(S_i, T_i)$ routing problems, 
where the $S_i, T_i$ are level subsets. Moreover, the interaction between
the various cover graphs is limited, because of the last bullet point.	
Given that the separation distance of $S$ is $r$, we believe that for a constant
fraction (by total size) of the matched pairs $(S_i, T_i; \phi_i)$, the distance
of these pairs is $\Omega(r)$. If \Conj{glr} is true, we can route $\Omega(r|S_i|)$
units of edge disjoint flow with vertex congestion $r$. Any vertex participates
is at most $2r$ such flow. We had hoped to overlay
such flows and get an overall vertex congestion of $O(r^2)$. Unfortunately, a direct
overlay of flows leads to an edge congestion of $O(r)$, which is not useful for \Conj{rout}.
Nonetheless, it felt that a proof of \Conj{glr} with the proof techniques of \Thm{cspoin}
might yield insight into \Conj{rout}.

\bibliographystyle{alpha}
\bibliography{monotonicity,monotonicity-full}

\end{document}